\newtheorem{theorem}{Theorem}
\newtheorem{proposition}[theorem]{Proposition}
\newtheorem{definition}[theorem]{Definition}
\newtheorem{lemma}[theorem]{Lemma}
\theoremstyle{definition}
\newtheorem{remark}{Remark}
\begin{document}

\title{Age of Information for Constrained Scheduling with Imperfect Feedback}
\author{Yuqing Zhu, Yuan-Hsun Lo, Yan Lin and Yijin Zhang

\thanks{\quad This work was supported in part by the National Natural Science Foundation of China under Grant 62071236. 
(\emph{Corresponding author: Yijin Zhang}.)}
\thanks{\quad Y. Zhu, Y. Lin, and Y. Zhang are with the School of Electronic and Optical Engineering, Nanjing University of Science and Technology, Nanjing 210094, China (e-mail: \{yuqing.zhu; yanlin\}@njust.edu.cn; yijin.zhang@gmail.com).}
\thanks{\quad Y.-H. Lo is with the Department of Applied Mathematics, National Pingtung University, Pingtung 90003, Taiwan (e-mail: yhlo@mail.nptu.edu.tw).}
}

\maketitle

\begin{abstract}
This paper considers a downlink system where an access point sends the monitored status of multiple sources to multiple users.
By jointly accounting for imperfect feedback and constrained transmission rate, which are key limited factors in practical systems, we aim to design scheduling algorithms to optimize the age of information (AoI) over the infinite time horizon.
For zero feedback under the generate-at-will traffic, we derive a closed-form lower bound of achievable AoI, which, to the best of our knowledge, reflects the impact of zero feedback for the first time, and propose a policy that achieves this bound in many cases by jointly applying rate splitting and modular arithmetic.
For zero feedback under the Bernoulli traffic, we develop a drift-plus-penalty (DPP) policy with a threshold structure based on the theory of Lyapunov optimization and provide a closed-form performance guarantee. 
Furthermore, we extend the design of this DPP policy to support general imperfect feedback without increasing the online computational complexity.
Numerical results verify our theoretical analysis and the AoI advantage of the proposed policies over state-of-the-art policies. 
\end{abstract}

\begin{IEEEkeywords}
Age of Information; Internet of things; Zero Feedback; Imperfect Feedback; Scheduling; Transmission rate
\end{IEEEkeywords}

\section{Introduction}\label{Introduction}

Age of Information (AoI)~\cite{2011Minimizing} has been commonly recognized as a standard performance metric to evaluate the information freshness of time-sensitive services in Internet of Things (IoT).
By definition, it measures the time elapsed since the generation time of the update packet most recently delivered to the destination, which has been shown to be fundamentally different from traditional metrics, such as throughput or delay.
Thus, much research attention has been devoted to optimizing AoI under various system settings~\cite{Yates2021}.
This paper focuses on packet scheduling for downlink systems where an access point (AP) sends the monitored status of multiple sources to multiple users over a shared channel.

Feedback availability has a significant impact on achievable AoI. 
Most studies on downlink scheduling assumed instant error-free feedback, which allows the AP to perform more reasonable scheduling actions with the aid of immediate transmission outcomes~\cite{Kadota2018ACM,Kadota2021TMC}. 
However, implementing such perfect feedback may be infeasible for multi-cast, long-distance, small-packet 
or low-cost communications~\cite{Nguyen2022,LoRAWC}, and 
unreliable feedback channels~\cite{Rezasoltani2022TWC,PyttelICC2024}, so that the AP has to make scheduling decisions with outdated or even without transmission outcomes. 
So, it is important to minimize the AoI loss due to the lack of perfect feedback through efficiently utilizing locally available knowledge.
Under this objective,~\cite{Bacinoglu2017ISIT,Feng2021tcom,Munari2025TCOM} considered traffic and action histories and a priori statistical information for zero-feedback scenarios, while~\cite{Lu2023ICCC,Ji2024TCOM,ZhaoINFOCOM2025, Rezasoltani2022TWC,PyttelICC2024} additionally considered feedback history for delayed- or erroneous-feedback scenarios.
Obviously, the lack of perfect feedback
would lead to higher modeling complexity and more difficulty in efficiently utilizing locally available knowledge with low complexity, which involves the application of Markov Decision Process (MDP)~\cite{Munari2025TCOM}, Partially-Observable MDP (POMDP)~\cite{PyttelICC2024,Lu2023ICCC,Ji2024TCOM}, Lyapunov optimization~\cite{ZhaoINFOCOM2025}, and offline sequence design~\cite{Li2021INFOCOM,Liyanaarachchi2025TIT}.

Another important factor influencing achievable AoI is the allowable maximum long-term average transmission rate.
Naturally, it is desirable to let it take the value $1$.
However, due to hardware and energy limitations, it is necessary to consider transmission rate constraints, which obviously complicates the policy design.
For a single source in finite-horizon scenarios, graphical analysis and solving linear equations are used in~\cite{Munari2025TCOM} to find a near-optimal zero-feedback policy under the generate-at-will (GAW) traffic; 
the theory of MDP is used in~\cite{Munari2025TCOM} to find an optimal perfect-feedback policy under the Bernoulli traffic, which also provides clues to design a zero-feedback policy.
However, these methods \textit{cannot} be used in infinite-horizon scenarios without performance loss,
since an infinite-horizon transmission rate constraint \textit{cannot} be exactly converted into a finite number of transmission instants or embedded into an MDP state that provides all necessary information for decision.
To deal with this issue, the most common method is to use the Constrained MDP (CMDP) framework~\cite{Elif2019TWC,Ceran2021JSAC}, 
but solving it through the direct primary formulation or the Lagrangian dual approach usually requires finite states, rigid applying conditions, and high computational complexity.
So, Lagrange-cost based greedy approaches\cite{Ji2024TCOM}, Lyapunov optimization based approaches~\cite{Vilni2022}, token MDP based approaches~\cite{1DelfaniTCOM2025}, and reinforcement learning based approaches~\cite{Ceran2021JSAC} have been proposed to address more infinite-horizon models, although lacking optimality guarantees.
The work\cite{Elif2019TWC,Ceran2021JSAC,Vilni2022,1DelfaniTCOM2025} all assumed perfect feedback, but \textit{cannot} be applied without perfect feedback due to the significant difference in locally available knowledge, while the work~\cite{Ji2024TCOM} assumed delayed feedback that can include zero feedback as a particular case and sequence-based approaches under transmission rate constraints~\cite{Li2021INFOCOM,Liyanaarachchi2025TIT} are also applicable for zero feedback.
Note that~\cite{PyttelICC2024} considered erroneous feedback under a real-time constraint due to energy harvesting, which can be formulated by POMDPs rather than CMDPs, but this approach \textit{cannot} be applied under a long-term average constraint.

Based on the aforementioned discussions, it is expected that jointly considering imperfect feedback and constrained transmission rate would make AoI optimization over the infinite horizon more technically challenging, which involves the following fundamental questions.

    (i) How to establish a lower bound of the AoI performance that reflects the negative impact of these two factors?
    
    (ii) How to efficiently utilize locally available knowledge with low complexity for decision making?
    
    (iii) How to use suitable approaches with low complexity to design optimal or near-optimal policies?

As summarized in Table~\ref{Comparison of our work and related work}, these questions have been only partially investigated.
Thus, we make an attempt to fill the gap.
The contributions of this paper are summarized as follows.

    (i) \textit{Zero Feedback:} Under the GAW traffic, we derive a closed-form lower bound of achievable AoI for infinite-horizon scenarios.
     To the best of our knowledge, this is the first lower bound for an arbitrary number of sources that reflects the impact of zero feedback, which involve a new proving approach different from~\cite{Kadota2018ACM,Elif2019TWC,Kadota2021TMC}.
    Furthermore, we propose a policy that achieves this bound in many cases, by extending an offline sequence-based scheme in~\cite{Li2021INFOCOM,Liyanaarachchi2025TIT}.
    Our design is based on an idea of jointly applying rate splitting and modular arithmetic.
    We also provide a lower bound for finite-horizon scenarios as a by-product, which is unknown in the literature.
   
    (ii) \textit{Zero Feedback:} Under the Bernoulli traffic, by defining a hybrid Lyapunov function that characterizes the transmission rate usage quadratically and the AoI evolution linearly, we develop a drift-plus-penalty (DPP) policy based on the theory of Lyapunov optimization.
    The DPP policy enjoys a simple threshold structure, which uses the conditional expected age-based weight~\cite{Kadota2021TMC} (also called age gain in~\cite{Chen2022TIT}) together with the transmission rate usage to measure the transmission preference.
    We further provide a closed-form performance guarantee with a connection to an optimal stationary randomized policy.
    
    (iii) \textit{Imperfect Feedback:} We extend the design of the DPP policy to account for general imperfect feedback that incorporates feedback delays and feedback errors. 
    We use the Bayesian rule to derive the conditional expected AoI based on given feedback information and feedback mechanism, without increasing the online computational complexity compared with that for zero feedback.
    Our derivation generalizes that for zero~\cite{Munari2022GLOBECOM}, delayed~\cite{Ji2024TCOM}, and erroneous feedback~\cite{PyttelICC2024}.

Throughout this paper, we compare our work with the most related work~\cite{Kadota2018ACM,Kadota2021TMC,Munari2025TCOM,Feng2021tcom,Li2021INFOCOM,Ji2024TCOM,Liyanaarachchi2025TIT,ZhaoINFOCOM2025,Ceran2021JSAC,PyttelICC2024} in optimization tools or in theoretical results to illustrate the impact of feedback, constrained transmission rate, and other factors.
Our work also includes many results in~\cite{Kadota2021TMC,Li2021INFOCOM,Ji2024TCOM,Munari2025TCOM,PyttelICC2024} as particular cases.
Please see Remarks~\ref{remarkfinitehorizon}--\ref{reducetomax-weight} for more details.
Numerical results demonstrate that the proposed policies outperform the state-of-the-art policies~\cite{Ji2024TCOM,Liyanaarachchi2025TIT}.

The rest of this paper is organized as follows. 
We start in Section~\ref{sec:SystemModel} by setting up the system model.
Sections~\ref{Zero-Feedback Scenarios: GAW Traffic} and~\ref{Zero Feedback: Bernoulli Traffic} investigate zero-feedback scenarios under the GAW and Bernoulli traffic, respectively. 
Section~\ref{Extension to Imperfect-Feedback Scenarios} expands to consider general imperfect feedback.
Numerical results are provided in Section~\ref{Numerical Results}. 
We draw the conclusions in Section~\ref{Conclusions}.

\begin{table}[ht] 
\centering  
\footnotesize
\caption{Comparison of our work and related work. $*$ indicates that only point-to-point scenarios are considered, while $\circ$ indicates that the work can be easily extended to support.}
\label{Comparison of our work and related work}
\begin{tabular}{@{\hspace{1pt}}c@{\hspace{2.5pt}}c@{\hspace{7pt}}c@{\hspace{7pt}}c@{\hspace{7pt}}c@{\hspace{5pt}}c}
\toprule 
\multicolumn{2}{c}{}
& {\bf{\makecell{\scriptsize{Time} \\ \scriptsize{horizon}}}}    &{\bf{\makecell{\scriptsize{Traffic}}}}   & {\bf{\makecell{\scriptsize{Feedback}}}} & {\bf{\makecell{\scriptsize{Long-term} \\ \scriptsize{average} \\ \scriptsize{constraint?}}}} \\
\midrule
\multirow{3}{*}[-1.5ex] {\bf{\makecell[c]{\scriptsize{Lower}\\ \scriptsize{bound}}}} 
&{ \cite{Ceran2021JSAC} } &infinite &GAW &perfect &\checkmark \\ \cmidrule(lr){2-6}
&{ \cite{Kadota2021TMC} } &infinite &Bernoulli &perfect  &$\checkmark^\circ$ \\ \cmidrule(lr){2-6}
&{ \makecell[c]{Ours}}  &infinite &GAW &zero &$\checkmark$ \\ \midrule
\multirow{3}{*}[-15ex] {\bf{\makecell[c]{\scriptsize{Policy} \\ \scriptsize{design}}}} 
&{\makecell{\cite{Bacinoglu2017ISIT}*}}  &finite &GAW &zero &\ding{55} \\ \cmidrule(lr){2-6}
&{\makecell{\cite{PyttelICC2024}*}}  &finite &GAW &erroneous &\ding{55} \\ \cmidrule(lr){2-6}
&{\makecell{\cite{Munari2025TCOM}*}}  &finite &Bernoulli &perfect/zero &$\checkmark$ \\ \cmidrule(lr){2-6}
&{\makecell{\cite{Elif2019TWC}*, \cite{Ceran2021JSAC}}}  &infinite &GAW &perfect &\checkmark \\ \cmidrule(lr){2-6}
&{\makecell{\cite{1DelfaniTCOM2025}*}}  &infinite &Bernoulli &perfect &\checkmark \\ \cmidrule(lr){2-6}
&{\makecell{~\cite{Liyanaarachchi2025TIT}}} &infinite &GAW &zero &$\checkmark^\circ$  \\ \cmidrule(lr){2-6}
&{\makecell{\cite{Rezasoltani2022TWC}*}}  &infinite &Bernoulli &erroneous &\ding{55} \\ \cmidrule(lr){2-6}
&{\makecell{\cite{Lu2023ICCC},~\cite{ZhaoINFOCOM2025}}}  &infinite &Bernoulli &delayed &\ding{55} \\ \cmidrule(lr){2-6}
&{\makecell{\cite{Ji2024TCOM}}}  &infinite &Bernoulli &delayed &$\checkmark$ \\ \cmidrule(lr){2-6}
&{\makecell[c]{Ours}} &infinite &Bernoulli &\makecell{zero/delayed/erroneous} &$\checkmark$ \\ 
\bottomrule
\end{tabular}
\end{table}

\section{System Model}  \label{sec:SystemModel}

\subsection{Network description}  \label{sec: Network description}
As shown in Fig.~\ref{systemmodel}, consider a downlink system~\cite{Kadota2021TMC,Ji2024TCOM,ZhaoINFOCOM2025,Liyanaarachchi2025TIT} where $N$ users indexed by $\mathcal{N}\triangleq\{1,2,\ldots, N\}$ want to receive state updates from corresponding $N$ sources indexed by $\mathcal{N}$ via an AP.
The channel time is divided into equal-length slots, indexed by $t\in\mathbb{N}$.
At the beginning of each slot $t$, each source $n$ independently generates a single-slot packet with a fixed probability $\lambda_n \in (0,1]$.
Let $d_{n,t} \in \{0,1\}$ denote the packet generation indicator for source $n$ in slot $t$, where $d_{n,t} = 1$ if source $n$ generates a new packet at the beginning of slot $t$, and $d_{n,t} = 0$ otherwise.
To maintain information freshness, any newly generated packet at each source immediately replaces the older one.

\begin{figure}[!ht]
	\centering
	\includegraphics[height = 1.05in]{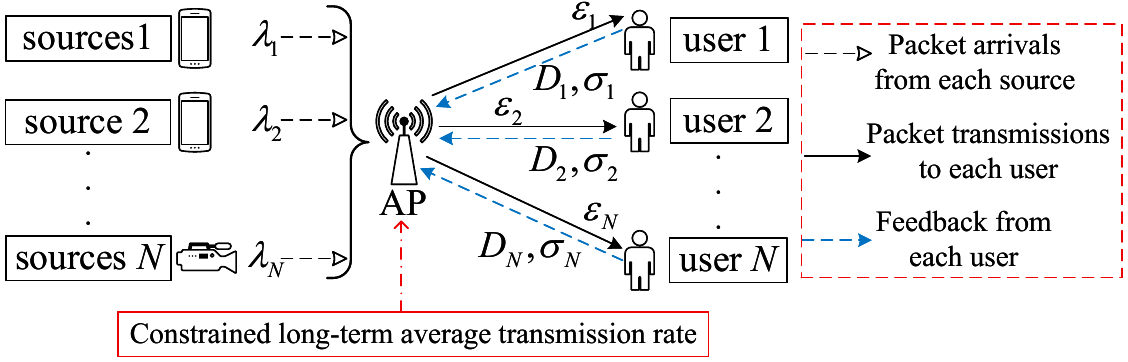}
	\caption{Considered downlink scenario.}
	\label{systemmodel}
\end{figure}

Let $a_{n,t}\in\{0,1\}$ denote the scheduling action for source $n$ in slot $t$, where $a_{n,t} = 1$ if source $n$ is scheduled at the beginning of slot $t$, and $a_{n,t} = 0$ otherwise.
For a shared channel, we require $\sum_{n \in \mathcal{N}} a_{n,t}\leq 1$ for each $t$.
Let $u_{n,t}\in\{0,1\}$ denote the transmission outcome of source $n$ in slot $t$, where $u_{n,t} = 1$ if a packet is successfully received by user $n$ at slot $t$, and $u_{n,t} = 0$ otherwise.
We consider an unreliable downlink channel with a fixed error probability $\varepsilon_n \in [0,1)$ between the AP and user~$n$, so that $u_{n,t} = 0$ with probability $\varepsilon_n$ when $a_{n,t} = 1$ and $u_{n,t} = 0$ certainly when $a_{n,t} = 0$.

Assume that the feedback channel for each user $n$ has a fixed error probability $\sigma_n\in [0,1]$ and a fixed delay $D_n\in\mathbb{N}$.
We further consider the following two feedback mechanisms.

    (i) \textit{ACKs mechanism:} 
    Only after each successful transmission, an acknowledgment (ACK) is sent back to the AP to notify the transmission outcome.
    Let $v_{n,t} \in \{0,1\}$ denote the feedback indicator from user $n$ at the end of slot $t-1$, 
where $v_{n,t} = 1$ if the AP received the delayed ACK from user $n$ that implies $u_{n, t - D_n - 1}=1$, but $v_{n,t} = 0$ otherwise.
We have $v_{n,t} = 0$ with probability $1 - (1 - \varepsilon_n)(1 - \sigma_n)$ when $a_{n,t - D_n - 1} = 1$, and $v_{n,t} = 0$ certainly when $a_{n,t - D_n - 1} = 0$.

    (ii) \textit{ACKs/NACKs mechanism:} 
    Whenever a transmission is performed, an ACK is sent back if the transmission is successful, but a negative ACK (NACK) is sent back otherwise. 
    Let $v_{n,t} \in \{0,1,-1\}$ denote the feedback indicator from user $n$ at the end of slot $t-1$, 
where $v_{n,t} = 1$ if the AP received the delayed ACK from user $n$ that implies $u_{n,t - D_n - 1}=1$, $v_{n,t} = -1$ if the AP received the delayed NACK from user $n$ that implies $a_{n,t - D_n - 1} = 1, u_{n,t - D_n - 1} = 0$, and $v_{n,t} = 0$ otherwise.
We have $v_{n,t} = 0$ with probability $\sigma_n$ when $a_{n,t - D_n - 1}=1$, and $v_{n,t} = 0$ certainly when $a_{n,t - D_n - 1} = 0$.

Note that $\sigma_n=1$ or $D_n=\infty$ corresponds to zero feedback for source $n$, which leads to $v_{n,t}=0$ for each $t\in\mathbb{N}$.

Denote the local age of source $n$ at the beginning of slot $t$ by $w_{n,t}$, which measures the number of slots elapsed since the generation moment of its freshest packet.
The local age of source $n$ is reset to zero if source $n$ generates a new packet at the beginning of slot $t$, and increases by one otherwise.
Then, the evolution of $w_{n,t}$ with $w_{n,0} = 0$ is given by
\begin{equation}
	\label{eq:Evolution_w}
	{w_{n,t+1}} =
	\begin{cases}
		{0,}            &\text{if }d_{n,t+1} = 1, \\ 
		{w_{n,t}+1,}    &\text{if }d_{n,t+1} = 0. 
	\end{cases}
\end{equation}

Denote the AoI associated with user $n$ at the beginning of slot $t$ by $h_{n,t}$, which measures the number of slots elapsed since the generation moment of its most recently received packet.
If the freshest packet of source $n$ is successfully transmitted in slot $t$, the AoI of user $n$ will be set to its local age in the previous slot plus one; otherwise, it will increase by one.
The evolution of $h_{n,t}$ with $h_{n,0} = 1$ is given by
\begin{equation}
	\label{eq:Evolution_AoI}
	h_{n,t+1} =
	\begin{cases}
		{w_{n,t}+1},       &\text{if }u_{n,t} = 1,\\ 
		{h_{n,t} + 1},     &\text{if }u_{n,t} = 0.
	\end{cases}
\end{equation}

\subsection{Problem Formulation}\label{problem formulation}

The scheduling policies considered in this paper are non-anticipative, i.e., policies that do not use future knowledge in making scheduling decisions.
Let $\bm{\Pi}$ denote the set of all non-anticipative policies.
We define the infinite-horizon Expected Weighted Sum AoI (EWSAoI) induced by $\bm{\pi}\in \bm{\Pi}$ as follows:
\begin{align}\label{eq: Delta}
J^{\bm{\pi}}  \triangleq 
\limsup_{T\to\infty}\frac{1}{T} \sum_{t=0}^{T-1}\sum_{n \in \mathcal{N}} \alpha_n \mathbb{E}^{\bm{\pi}}&\big[ h_{n,t} \mid h_{n,0} = 1, \notag \\ & w_{n,0} = 0, \forall{n \in \mathcal{N} } \big],
\end{align}
where the expectation $\mathbb{E}^{\bm{\pi}}[\cdot]$ is taken over the randomness of the system and the transmission actions under a policy $\bm{\pi} \in \bm{\Pi}$, the normalized source weight $\alpha_n>0$ represents the priority of source $n$ with $\sum_{n \in \mathcal{N}} \alpha_n=1$.
The infinite-horizon transmission rate induced by $\bm{\bm{\pi}} \in \bm{\Pi}$ is defined as
\begin{align}  
q^{\bm{\pi}}  \triangleq \limsup_{T\to\infty}\frac{1}{T}\sum_{t=0}^{T-1}\sum_{n \in \mathcal{N}}\mathbb{E}^{\bm{\pi}}&\big[a_{n,t} \mid h_{n,0} = 1,\notag  \\ & w_{n,0} = 0,\forall n \in \mathcal{N}  \big ].
\label{eq:Gamma}
\end{align}
Using~\eqref{eq: Delta} and~\eqref{eq:Gamma}, the optimization problem is formulated as
\begin{equation}
    \begin{aligned}
        \min_{\bm{\pi} \in \bm{\Pi}}  {J}^{\bm{\pi}} , \quad
        \text{ s.t. }   q^{\bm{\pi}}  \leq \rho,
    \label{problem1}
\end{aligned}
\end{equation}
where $\rho$ denotes the allowable maximum transmission rate.

\section{Zero-Feedback Scenarios: GAW Traffic}
\label{Zero-Feedback Scenarios: GAW Traffic}

This section concentrates on zero-feedback scenarios under the GAW traffic.
We derive a lower bound of the infinite-horizon EWSAoI and propose a policy that achieves this bound in many cases.

\subsection{Lower Bound}
To the best of our knowledge, this is the first lower bound that reflects the impact of zero feedback.
Our approach is different from those in~\cite{Kadota2018ACM,Kadota2021TMC} that rely on a given single sample path and applying Jensen's inequality to obtain the minimum of the sample variance of inter-successful-delivery time, which may be infeasible under zero feedback.

To facilitate our analysis, we relax the constraint in~\eqref{problem1} and decouple the model to $N$ point-to-point systems.
Let $\bm{\Pi^\text{single}}$ denote the set of all non-anticipative policies for a generic point-to-point system.
Specifically, we assume that in each decoupled system $n$, source $n$ has a transmission rate constraint $\rho_n \in [0,\rho]$ for each $n\in\mathcal{N}$, with $\sum_{n\in\mathcal{N}} \rho_n = \rho$.
For any $\pi^\text{single} \in \bm{\Pi^\text{single}}$, we define 
\[
J_n^{\bm{\pi}^\text{single}} \triangleq \limsup_{T\to\infty}\frac{1}{T} \sum_{t=0}^{T-1} \mathbb{E}^{\bm{\pi}^\text{single}}\left[ h_{n,t} \mid h_{n,0} = 1, w_{n,0} = 0 \right],
\]
\[
q_n^{\bm{\pi}^\text{single}} \triangleq \limsup_{T\to\infty}\frac{1}{T}\sum_{t=0}^{T-1}\mathbb{E}^{\bm{\pi}^\text{single}}\left[a_{n,t} \mid h_{n,0}\! =\! 1, w_{n,0}\! =\! 0 \right].
\]
Thus, we reduce problem~\eqref{problem1} to $N$ single-source problems, each of which can be given by
\begin{align}  \label{problem_n}
   \min_{\bm{\pi^\text{single}} \in \bm{\Pi^\text{single}}}  J_n^{\bm{\pi}^\text{single}}, \quad \text{ s.t. }  q_n^{\bm{\pi}^\text{single}} \leq \rho_n.
\end{align}
Denote by $J_{n}^{\text{single}*}$ the value of~\eqref{problem_n}. 
Then, we have
\begin{align}  \label{Jpi}
    J^{\bm{\pi}} \geq \sum_{n\in\mathcal{N}}\alpha_n J_{n}^{\text{single}*},
\end{align}
for any $\bm{\pi}\in \bm{\Pi}$ with $q^{\bm{\pi}} \leq \rho$.

In the following, we focus on a generic decoupled point-to-point system $n$. 
Over the finite time horizon $[0, T-1]$, consider those sample paths associated with a policy $\bm{\pi}^\text{single}\in \bm{\Pi}^\text{single}$, where total $U_n(T)$ transmissions are conducted at slots $l_{n,1}, l_{n,2}, \ldots, l_{n,U_n(T)}$. 
As shown in Fig.~\ref{fig:exampleofh}, we let $X_{n,k} \triangleq l_{n,k} + 1 - (l_{n,k-1}+1)$ denote the $k^{\text{th}}$ transmission interval with $l_{n,0} = -1$ and $l_{n,U_n(T)+1} = T - 1$ for $k \in \{1, 2, \ldots,U_n(T)+1\}$.
Define $R_n(T)$ as the cumulative AoI over $[0, T-1]$.
Then, under any policy $\bm{\pi} \in \bm{\Pi}$, following~\cite{Munari2025TCOM}, the conditional expected average AoI over $[0, T-1]$ given $\{l_{n,k}\}_{k=1}^{U_n(T)} $ can be expressed as
\begin{align}
 &  \mathbb{E}^{\bm{\pi}^\text{single}}\left[\frac{R_n(T)}{T}\mid \{l_{n,k}\}_{k=1}^{U_n(T)} \right]= \notag \\
 &   \frac{1}{T} \sum_{k = 1}^{U_n(T)+1} \left[ \frac{X_{n,k}(X_{n,k} + 1)}{2} + \sum_{k' = 1}^{k-1} X_{n,k} X_{n,k'} \varepsilon_n^{k - k'} \right],
\label{roleoffeedback} 
\end{align}
where $ \frac{X_{n,k}(X_{n,k} + 1)}{2}$ accounts for the growth of AoI during the $k^{\text{th}}$ transmission interval of length $X_{n,k}$, 
whereas $X_{n,k} X_{n,k'}$ captures a rectangle of sides $X_{n,k}$ and $X_{n,k'} $ that represents the additional AoI due to a sequence of consecutive failures spanning from the ${k'}^{\text{th}}$ to the $(k-1)^{\text{th}}$ transmission, weighted by the probability $ \varepsilon_n^{k - k'}$.
\begin{figure}
\centering
\includegraphics[width = 3in]{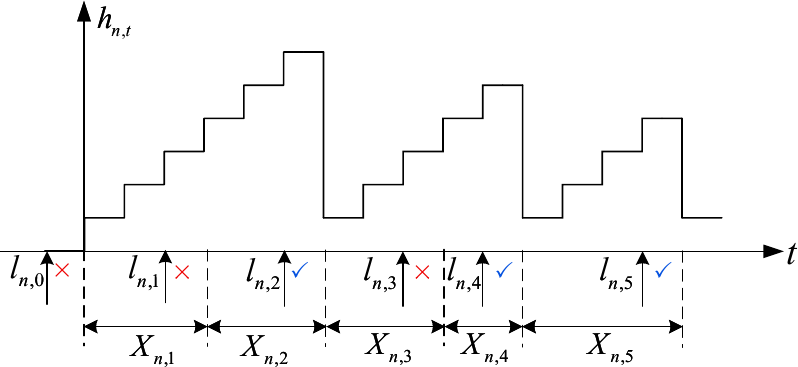}
    \caption{An example of evolution of $h_{n,t}$ for a generic point-to-point system $n$ under the GAW traffic.
    The symbol $\color{red}{\times}$ indicates the transmission failed, whereas $\color{blue}{\checkmark}$ indicates the transmission was successful.}  \label{fig:exampleofh} 
\end{figure}
Rearranging the terms in~\eqref{roleoffeedback}, we obtain the following quadratic form:
\begin{align}
& \mathbb{E}^{\bm{\pi}^\text{single}}\left[\frac{R_n(T)}{T}\mid \{l_{n,k}\}_{k=1}^{U_n(T)} \right]\notag \\
& =  \frac{1}{2T} \sum_{k = 1}^{U_n(T)+1} \left[ X_{n,k}^2 + 2\sum_{k' = 1}^{k-1} X_{n,k} X_{n,k'} \varepsilon_n^{k - k'}  \right] + \frac{1}{2} \notag \\
& = \frac{1}{2T} \sum_{k=1}^{U_n(T)+1} \sum_{k'=1}^{U_n(T)+1} \varepsilon_n^{|k-k'|} X_{n,k}  X_{n,k'} + \frac{1}{2}. \label{eq: ERT}
\end{align}
Finding optimal $\{X_{n,k}\}_{k=1}^{U_n(T)+1}$ that minimizes~\eqref{eq: ERT} is NP-complete~\cite[Chapter 15]{1998NP-complete}. 
Hence, we relax $\{X_{n,k}\}_{k=1}^{U_n(T)+1}$ into continuous variables $\{\hat{X}_{n,k}\}_{k=1}^{U_n(T)+1}$ and formulate the following optimization problem:
\begin{equation}  \label{fxproblem}
    \begin{aligned}
    \min_{ {\hat{\bm{X}}_n} }   \quad &{f_n}({\hat{\bm{X}}_n}) = \frac{1}{2T} {\hat{\bm{X}}}_n^{\text{T}} \bm{A}_n {\hat{\bm{X}}_n} + \frac{1}{2}, \\
    \text{s.t.}\ \quad &\bm{e}^{\text{T}}{\hat{\bm{X}}_n} = T, \\
     & \hat{X}_{n,k} \in [0, T-1], \forall{k} \in \{1,2,\ldots,U_n(T)+1\},
\end{aligned}
\end{equation}
where ${\hat{\bm{X}}_n} \triangleq (\hat{X}_{n,1}, \hat{X}_{n,2}, \dots, \hat{X}_{n,U_n(T)+1})^{\text{T}}$, $\bm{A}_n$ is a $(U_n(T) +1) \times (U_n(T)+1)$-dimensional matrix whose $(i,j)$-th entry is $\varepsilon_n^{|i-j|}$ for each $i,j\in\{1,2,\ldots,U_n(T)+1\}$,
and $\bm{e}$ is a $( U_n(T) + 1 ) \times 1$-dimensional all-ones vector.
Positive definiteness of $\bm{A}_n$ guarantees strict convexity of ${f}_n({\hat{\bm{X}}_n})$, implying that there exists a unique global minimum of $f_n$~\cite[Chapter 7]{1985Matrix}.
In the following, we will derive the minimum of ${f}_n({\hat{\bm{X}}_n})$, denoted by $f_n^*$, which can serve as a lower bound for the original expression with discrete variables~\eqref{eq: ERT}.

\begin{lemma}  \label{minimumfn}

For each $n\in \mathcal{N}$, the minimum of ${f_n}({\hat{\bm{X}}_n})$ is
\begin{align}  \label{f*_finitehorizon}
f_n^* = \frac{T}{2} \frac{1+\varepsilon_n}{2 + (U_n(T) - 1)(1-\varepsilon_n)} + \frac{1}{2}.
\end{align}
\end{lemma}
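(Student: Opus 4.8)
The plan is to treat \eqref{fxproblem} as a strictly convex quadratic program (the paper already records that $\bm A_n\succ 0$) with a single equality constraint, solve the equality‑constrained relaxation in closed form, and then verify a posteriori that the resulting point obeys the box constraints $\hat X_{n,k}\in[0,T-1]$, so that it is in fact the global minimizer of \eqref{fxproblem}. Write $m\triangleq U_n(T)+1$ for the dimension. The Lagrangian stationarity condition for $\min\{\tfrac{1}{2T}\hat{\bm X}_n^{\mathrm T}\bm A_n\hat{\bm X}_n:\bm e^{\mathrm T}\hat{\bm X}_n=T\}$ is $\tfrac1T\bm A_n\hat{\bm X}_n=\mu\bm e$ for a multiplier $\mu$, i.e. $\hat{\bm X}_n=\mu T\bm A_n^{-1}\bm e$, and enforcing $\bm e^{\mathrm T}\hat{\bm X}_n=T$ gives $\mu=1/(\bm e^{\mathrm T}\bm A_n^{-1}\bm e)$. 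Substituting back and using $\bm A_n\hat{\bm X}_n=\mu T\bm e$ yields $\hat{\bm X}_n^{\mathrm T}\bm A_n\hat{\bm X}_n=\mu T(\bm e^{\mathrm T}\hat{\bm X}_n)=\mu T^2$, hence $f_n^*=\tfrac{\mu T}{2}+\tfrac12=\tfrac{T}{2\,\bm e^{\mathrm T}\bm A_n^{-1}\bm e}+\tfrac12$. Everything therefore reduces to evaluating the scalar $\bm e^{\mathrm T}\bm A_n^{-1}\bm e$.

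To get that scalar I would exploit the Kac--Murdock--Szeg\H{o} structure of $\bm A_n=(\varepsilon_n^{|i-j|})$: instead of inverting it, guess and verify that $\bm A_n\bm y=\bm e$ is solved by $\bm y=\tfrac{1}{1+\varepsilon_n}(1,\,1-\varepsilon_n,\,\dots,\,1-\varepsilon_n,\,1)^{\mathrm T}$ (endpoint entries $\tfrac{1}{1+\varepsilon_n}$, interior entries $\tfrac{1-\varepsilon_n}{1+\varepsilon_n}$). The check is a one‑line geometric‑series computation: for each row $i$, $\sum_{j=1}^m\varepsilon_n^{|i-j|}=\tfrac{1-\varepsilon_n^{\,i}+\varepsilon_n-\varepsilon_n^{\,m-i+1}}{1-\varepsilon_n}$, and combining this with the two corner corrections collapses the $i$‑th component of $\bm A_n\bm y$ to $\tfrac{1+\varepsilon_n}{1+\varepsilon_n}=1$ independently of $i$. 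Hence $\bm e^{\mathrm T}\bm A_n^{-1}\bm e=\bm e^{\mathrm T}\bm y=\tfrac{2+(m-2)(1-\varepsilon_n)}{1+\varepsilon_n}=\tfrac{2+(U_n(T)-1)(1-\varepsilon_n)}{1+\varepsilon_n}$, and substituting this into the expression above reproduces exactly \eqref{f*_finitehorizon}. (The degenerate cases $\varepsilon_n=0$, where $\bm A_n=\bm I$, and $m=1$ are immediate.)

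It then remains to confirm feasibility of $\hat{\bm X}_n=\mu T\bm y$. Since $\mu=\tfrac{1+\varepsilon_n}{2+(U_n(T)-1)(1-\varepsilon_n)}>0$ and every entry of $\bm y$ lies in $(0,\tfrac{1}{1+\varepsilon_n}]$ with the endpoint entries the largest, every coordinate satisfies $0<\hat X_{n,k}\le \mu T/(1+\varepsilon_n)=T/(2+(U_n(T)-1)(1-\varepsilon_n))\le T/2\le T-1$ for $T\ge 2$. Thus the box constraints are inactive, the stationary point above is the unique global minimizer of \eqref{fxproblem}, and $f_n^*$ has the stated value.

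I expect the only genuine obstacle to be the second step, producing the closed form of $\bm e^{\mathrm T}\bm A_n^{-1}\bm e$. One can alternatively write out the full tridiagonal inverse of $\bm A_n$ and sum all of its entries, but that route is messier, as it requires the cancellation $(m-2)\varepsilon_n^2-2(m-1)\varepsilon_n+m=(1-\varepsilon_n)\bigl(m-(m-2)\varepsilon_n\bigr)$ to clear the $1-\varepsilon_n^2$ denominator; the guess‑and‑verify identity $\bm A_n\bm y=\bm e$ avoids this entirely and is the cleanest path.
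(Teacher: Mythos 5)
Your proposal is correct and takes essentially the same approach as the paper: both characterize the minimizer as $\hat{\bm X}_n^* = T\bm A_n^{-1}\bm e/(\bm e^{\mathrm T}\bm A_n^{-1}\bm e)$ and reduce the whole computation to the closed form of $\bm e^{\mathrm T}\bm A_n^{-1}\bm e$, using exactly the vector $\bm A_n^{-1}\bm e=\tfrac{1}{1+\varepsilon_n}(1,1-\varepsilon_n,\dots,1-\varepsilon_n,1)^{\mathrm T}$ that the paper states. The only cosmetic difference is that you obtain the optimality condition from Lagrange stationarity whereas the paper invokes the Cauchy--Schwarz inequality in the $\bm A_n$-inner product; your explicit guess-and-verify check that $\bm A_n\bm y=\bm e$ and the feasibility remark for $T\ge 2$ are minor additions the paper leaves implicit.
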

\begin{proof}
   See Appendix~\ref{prooflemmaminimumf}.
\end{proof}

\begin{remark}  \label{remarkfinitehorizon}
\cite{Munari2025TCOM} obtained the solution of ${\hat{\bm{X}}_n}$ to problem~\eqref{fxproblem} through solving a full-rank system of $U_n(T)$ linear equations in $U_n(T)$ unknowns, with time complexity $\mathcal{O}((U_n(T))^3)$, which leads to no explicit $f_n^*$.
In contrast, we can derive the {closed-form} solution of ${\hat{\bm{X}}_n}$ as shown in Appendix~\ref{prooflemmaminimumf} to problem~\eqref{fxproblem} with time complexity $\mathcal{O}(1)$, which leads to a closed-form $f_n^*$.
\end{remark}

We then use ${f}_n^*$ in~\eqref{f*_finitehorizon} to establish a lower bound of $J_{n}^{\text{single}*}$.

\begin{theorem}  \label{zero-feedback lowerbound}
    In zero-feedback scenarios under the GAW traffic, for each $n\in \mathcal{N}$, we have
    \begin{align} \label{Jnsingle}
        J_{n}^{\text{single}*} \geq \frac{1 + \varepsilon_n}{2\rho_n (1-\varepsilon_n)} + \frac{1}{2}.
    \end{align}  
\end{theorem}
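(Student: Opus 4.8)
The plan is to upgrade the finite-horizon, per-sample-path estimate of Lemma~\ref{minimumfn} to the infinite-horizon bound~\eqref{Jnsingle} in two moves: first average over the random transmission schedule by applying Jensen's inequality to the \emph{number} of transmissions $U_n(T)$, and then let $T\to\infty$ using the rate constraint $q_n^{\bm{\pi}^\text{single}}\le\rho_n$ from~\eqref{problem_n}.

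Fix any feasible $\bm{\pi}^\text{single}\in\bm{\Pi}^\text{single}$ and condition on the realized transmission instants $\{l_{n,k}\}_{k=1}^{U_n(T)}$. By~\eqref{eq: ERT}, the conditional expected average AoI is the value of the quadratic form $f_n$ at the integer interval vector $\bm{X}_n$, which is nonnegative with $\bm{e}^{\text{T}}\bm{X}_n=T$ (the degenerate case $U_n(T)=0$ gives $\tfrac{T}{2}+\tfrac12$ and matches~\eqref{f*_finitehorizon}). Since the minimizer in Lemma~\ref{minimumfn} lies in the interior of the box, $f_n(\bm{X}_n)\ge\min\{f_n(\hat{\bm{X}}_n):\hat{\bm{X}}_n\ge\bm{0},\ \bm{e}^{\text{T}}\hat{\bm{X}}_n=T\}=f_n^*$, i.e. $\mathbb{E}^{\bm{\pi}^\text{single}}[R_n(T)/T\mid\{l_{n,k}\}]\ge g_T(U_n(T))$, where
\[
g_T(u)\ \triangleq\ \frac{T}{2}\cdot\frac{1+\varepsilon_n}{1+\varepsilon_n+(1-\varepsilon_n)u}\ +\ \frac12 .
\]
Since $\varepsilon_n\in[0,1)$, the denominator of $g_T$ is positive and affine in $u$, so $g_T$ is strictly decreasing and convex on $[0,\infty)$. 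Taking the expectation over the schedule and combining the tower rule with Jensen's inequality,
\[
\frac1T\sum_{t=0}^{T-1}\mathbb{E}^{\bm{\pi}^\text{single}}[h_{n,t}]\ =\ \mathbb{E}^{\bm{\pi}^\text{single}}\!\left[\frac{R_n(T)}{T}\right]\ \ge\ \mathbb{E}^{\bm{\pi}^\text{single}}\big[g_T(U_n(T))\big]\ \ge\ g_T\big(\mathbb{E}^{\bm{\pi}^\text{single}}[U_n(T)]\big).
\]

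Next I would use $U_n(T)=\sum_{t=0}^{T-1}a_{n,t}$, so $\mathbb{E}^{\bm{\pi}^\text{single}}[U_n(T)]=\sum_{t=0}^{T-1}\mathbb{E}^{\bm{\pi}^\text{single}}[a_{n,t}]$, and $q_n^{\bm{\pi}^\text{single}}\le\rho_n$ means $\limsup_{T\to\infty}\tfrac1T\mathbb{E}^{\bm{\pi}^\text{single}}[U_n(T)]\le\rho_n$; hence for every $\delta>0$ there is $T_0$ with $\mathbb{E}^{\bm{\pi}^\text{single}}[U_n(T)]\le(\rho_n+\delta)T$ for all $T\ge T_0$. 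Using that $g_T$ is decreasing, $g_T(\mathbb{E}^{\bm{\pi}^\text{single}}[U_n(T)])\ge g_T((\rho_n+\delta)T)$ for $T\ge T_0$, and a one-line computation gives $g_T((\rho_n+\delta)T)\to\frac{1+\varepsilon_n}{2(\rho_n+\delta)(1-\varepsilon_n)}+\frac12$ as $T\to\infty$. Taking $\liminf_{T\to\infty}$ of the chain above and using $\limsup\ge\liminf$ gives $J_n^{\bm{\pi}^\text{single}}\ge\frac{1+\varepsilon_n}{2(\rho_n+\delta)(1-\varepsilon_n)}+\frac12$; letting $\delta\downarrow0$ and then taking the infimum over feasible $\bm{\pi}^\text{single}\in\bm{\Pi}^\text{single}$ yields~\eqref{Jnsingle} (the case $\rho_n=0$ being trivial since the bound is $+\infty$).

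The step I expect to be the most delicate is the interplay between the $\limsup$ in the constraint and the $\limsup$ defining $J_n^{\bm{\pi}^\text{single}}$: $\mathbb{E}^{\bm{\pi}^\text{single}}[U_n(T)]$ is controlled only for large $T$, so the argument genuinely relies on the \emph{monotonicity} of $g_T$, not merely its convexity, to carry the bound through, after which the passage from the finite-$T$ inequality to the infinite-horizon quantity costs nothing. A secondary point worth stating carefully is that $U_n(T)$ is a bona fide random variable whose law is governed only by the policy's own randomization — this is precisely the zero-feedback hypothesis at work, since it is what makes~\eqref{eq: ERT} valid (erasures independent of the schedule) and what legitimizes applying Jensen's inequality in $U_n(T)$.
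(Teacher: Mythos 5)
Your proposal is correct and follows essentially the same route as the paper's Appendix~\ref{proofzerofeedbacklowerbound}: start from the finite-horizon bound $f_n^*$ of Lemma~\ref{minimumfn}, take the expectation over the random schedule, and pass to the infinite horizon via the constraint $\limsup_{T\to\infty}\mathbb{E}^{\bm{\pi}^\text{single}}[U_n(T)/T]\le\rho_n$. The only difference is that you make explicit the Jensen-plus-monotonicity argument in $U_n(T)$ and the $\limsup$/$\liminf$ bookkeeping with the $\delta$-slack, which the paper's two-line chain of inequalities leaves implicit.
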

\begin{proof}
See Appendix~\ref{proofzerofeedbacklowerbound}.
\end{proof}

\begin{remark}  \label{lowerboundenergy}
For a generic point-to-point system with zero feedback and the GAW traffic,~\cite{Feng2021tcom} also established a lower bound under energy harvesting.
Different from ours, the proof therein considers a continuous-time model ideally assuming that every transmission occupies zero time, which allows
$\limsup_{T \rightarrow \infty} \mathbb{E}^{\bm{\pi}^\text{single}} [\frac{l_{n,k}}{U_n(T)}] = 0$
for any $k \geq 1$ to be used to significantly simplify the derivation.
But a little surprisingly, we find that the lower bound in~\cite{Feng2021tcom} is $\frac{1 + \varepsilon_n}{2(1-\varepsilon_n)}$, which is exactly $1/2$ less than ours when $\rho_n = 1$, although they are established under different system models. 
This is because, in~\cite{Feng2021tcom}, the energy constraint limits the AP to conduct at most $T$ transmissions, which coincides with the transmission rate constraint in $\eqref{problem1}$ when $\rho_n = 1$.
Note that the additional $1/2$ term stems from our discrete-time model where each transmission occupies one full slot.
\end{remark}

By~\eqref{Jpi} and~\eqref{Jnsingle}, for any $\bm{\pi} \in \bm{\Pi}$ with $q^{\bm{\pi}} \leq \rho$, we have
\begin{align}  \label{Jpibound}
    J^{\bm{\pi}} 
    \geq \sum_{n\in\mathcal{N}}\alpha_n J_{n}^{\text{single}*} 
    \geq \sum_{n \in \mathcal{N}} \alpha_n \bigg(\frac{1 + \varepsilon_n}{2\rho_n (1-\varepsilon_n)} + \frac{1}{2} \bigg),
\end{align}
where $\rho_n \in [0,\rho], \forall n \in \mathcal{N}$ and $\sum_{n \in \mathcal{N}} \rho_n \leq \rho$.
We minimize the right-hand side (RHS) of~\eqref{Jpibound} through solving
\begin{equation}  \label{problem_gaw_infinite}
    \begin{aligned}
    \min_{ \rho_n \in [0,\rho], \forall n \in \mathcal{N} } \quad &  \sum_{n \in \mathcal{N}} \alpha_n \bigg(\frac{1 + \varepsilon_n}{2\rho_n (1-\varepsilon_n)} + \frac{1}{2} \bigg),\\
    \text{s.t.} \quad\quad\quad & \sum_{n \in \mathcal{N}} \rho_n \leq \rho.
    \end{aligned}
\end{equation}
By the convex optimization theory~\cite{boyd2004convex}, we can obtain the solution of $\rho_n$ to~\eqref{problem_gaw_infinite} as
\begin{equation}  \label{bound_rhon}
\rho_n^* = \frac{\displaystyle \rho \sqrt{\alpha_n(1 + \varepsilon_n)/(1 - \varepsilon_n)}}{\displaystyle \sum_{n'\in\mathcal{N}} \sqrt{\alpha_{n'}(1 + \varepsilon_{n'})/(1 - \varepsilon_{n'})}}, \quad \forall n \in \mathcal{N}.
\end{equation}
Substituting~\eqref{bound_rhon} and $\sum_{n \in \mathcal{N}} \alpha_n = 1$ into the RHS of~\eqref{Jpibound}, we obtain the following lower bound.

\begin{theorem}  \label{infinitelowerbound}
In zero-feedback scenarios under the GAW traffic, for any $\bm{\pi} \in \bm{\Pi}$ with $q^{\bm{\pi}} \leq \rho$, we have 
\begin{equation}\label{f*_infinitehorizon_bound}
 J^{\bm{\pi}} \geq \frac{1}{2\rho} \left( \sum_{n \in \mathcal{N}} \sqrt{\alpha_n  \dfrac{1+\varepsilon_n}{1-\varepsilon_n}} \right)^2 + \frac {1} {2}. 
\end{equation}
\end{theorem}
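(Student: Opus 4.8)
The plan is to assemble Theorem~\ref{infinitelowerbound} directly from the pieces that are already in place, so that the remaining work is purely the convex-optimization step and a clean substitution. First I would recall the chain~\eqref{Jpibound}: for any $\bm{\pi}\in\bm{\Pi}$ with $q^{\bm{\pi}}\leq\rho$, the decoupling bound~\eqref{Jpi} combined with Theorem~\ref{zero-feedback lowerbound} gives
\[
J^{\bm{\pi}}\;\geq\;\sum_{n\in\mathcal{N}}\alpha_n J_n^{\text{single}*}\;\geq\;\sum_{n\in\mathcal{N}}\alpha_n\Bigl(\frac{1+\varepsilon_n}{2\rho_n(1-\varepsilon_n)}+\frac12\Bigr),
\]
valid for every feasible split $\rho_n\in[0,\rho]$ with $\sum_{n\in\mathcal{N}}\rho_n\leq\rho$. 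Since this holds for \emph{every} such split, the bound is strongest after minimizing the right-hand side over the $\rho_n$'s, i.e.\ after solving~\eqref{problem_gaw_infinite}. The constant $\sum_n\alpha_n/2=\tfrac12$ is fixed and pulls out, so the real content is minimizing $\sum_n \alpha_n\tfrac{1+\varepsilon_n}{1-\varepsilon_n}\cdot\tfrac{1}{2\rho_n}$ subject to $\sum_n\rho_n\le\rho$.

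Next I would carry out that minimization. Write $c_n\triangleq\alpha_n\tfrac{1+\varepsilon_n}{1-\varepsilon_n}>0$, so the objective is $\tfrac12\sum_n c_n/\rho_n$. The objective is convex and strictly decreasing in each $\rho_n$, so the constraint is tight at the optimum, $\sum_n\rho_n=\rho$, and the feasible region is compact, guaranteeing an interior optimum with all $\rho_n>0$. Forming the Lagrangian $\tfrac12\sum_n c_n/\rho_n+\mu(\sum_n\rho_n-\rho)$ and setting derivatives to zero gives $-c_n/(2\rho_n^2)+\mu=0$, hence $\rho_n\propto\sqrt{c_n}$; imposing $\sum_n\rho_n=\rho$ yields exactly~\eqref{bound_rhon},
\[
\rho_n^*=\frac{\rho\sqrt{\alpha_n(1+\varepsilon_n)/(1-\varepsilon_n)}}{\sum_{n'\in\mathcal{N}}\sqrt{\alpha_{n'}(1+\varepsilon_{n'})/(1-\varepsilon_{n'})}}.
\]
(One should check $\rho_n^*\leq\rho$, which is immediate since each term in the numerator is at most the full sum in the denominator.) Substituting back, $\sum_n c_n/\rho_n^*=\tfrac{1}{\rho}\bigl(\sum_n\sqrt{c_n}\bigr)\bigl(\sum_n\sqrt{c_n}\bigr)=\tfrac1\rho\bigl(\sum_n\sqrt{\alpha_n(1+\varepsilon_n)/(1-\varepsilon_n)}\bigr)^2$, and halving and adding the $\tfrac12$ produces exactly~\eqref{f*_infinitehorizon_bound}.

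Honestly, there is no serious obstacle here: the heavy lifting—the decoupling inequality~\eqref{Jpi}, the continuous relaxation, the closed-form minimizer of the quadratic form in Lemma~\ref{minimumfn}, and the per-source bound in Theorem~\ref{zero-feedback lowerbound}—has already been done. The only point requiring a little care is the logical order of quantifiers: the right-hand side of~\eqref{Jpibound} depends on the artificial split $\{\rho_n\}$, which was introduced only as an analysis device, so one must emphasize that $J^{\bm{\pi}}$ is bounded below by the \emph{infimum} over all admissible splits, and that this infimum is attained at $\{\rho_n^*\}$ by the convexity/compactness argument above. Alternatively, one can bypass Lagrange multipliers entirely and invoke the Cauchy--Schwarz inequality: $\bigl(\sum_n\sqrt{c_n}\bigr)^2=\bigl(\sum_n\sqrt{c_n/\rho_n}\sqrt{\rho_n}\bigr)^2\leq\bigl(\sum_n c_n/\rho_n\bigr)\bigl(\sum_n\rho_n\bigr)\leq\rho\sum_n c_n/\rho_n$, which gives the bound $\sum_n c_n/\rho_n\geq\tfrac1\rho\bigl(\sum_n\sqrt{c_n}\bigr)^2$ in one line, with equality when $\rho_n\propto\sqrt{c_n}$. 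I would likely present the Cauchy--Schwarz version as the cleanest route and relegate~\eqref{bound_rhon} to a remark on the optimal split.
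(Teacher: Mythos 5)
Your proposal follows essentially the same route as the paper: it invokes the chain~\eqref{Jpibound} built from the decoupling bound~\eqref{Jpi} and Theorem~\ref{zero-feedback lowerbound}, minimizes the right-hand side over the rate split by solving~\eqref{problem_gaw_infinite} to obtain~\eqref{bound_rhon}, and substitutes back to get~\eqref{f*_infinitehorizon_bound}. The Cauchy--Schwarz shortcut you mention is a clean alternative to the Lagrangian computation of the optimal split, but it does not change the structure of the argument.
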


\begin{remark}  \label{perfectlowerbound}
One may wonder about the impact of zero feedback on the lower bound derived in Theorem~\ref{infinitelowerbound}.
For perfect-feedback scenarios under the GAW traffic,~\cite{Ceran2021JSAC} provided a lower bound of infinite-horizon EWSAoI, which is given by 
\[\frac{1}{2\rho} \left(\sum_{n \in \mathcal{N}} \sqrt{\frac{\alpha_n}{1-\varepsilon_n}}\right)^2 + \frac{\rho}{2} \min_{n\in\mathcal{N}} \frac{\alpha_n\varepsilon_n}{(1-\varepsilon_n)} + \frac{1}{2}.
\]
Roughly speaking, the gap between these bounds increases linearly with $1+\varepsilon_n$ for each $n \in \mathcal{N}$ and decreases linearly with $\rho$, indicating that our bound considers a severer negative impact of zero feedback for more transmission errors or lower allowable maximum rate.
In a symmetric network where $\varepsilon_n = \varepsilon$ and $\alpha_n = 1/N$ for each $n\in\mathcal{N}$, the ratio between these bounds is exactly
\( 1 + \frac{\varepsilon(N^2 - \rho^2)}{N\rho(1-\varepsilon) + N^2 + \varepsilon \rho^2 } \).
When $\varepsilon_n = 0$ for each $n \in \mathcal{N}$, the two bounds coincide as $\frac{1}{2\rho} \left(\sum_{n\in\mathcal{N}} \sqrt{\alpha_n}\right)^2 + \frac{1}{2}$, indicating that our bound considers no impact of zero feedback without transmission errors.
Note that \cite{Kadota2021TMC} provides a slightly tighter lower bound than~\cite{Ceran2021JSAC} for $\rho = 1$ (which can be easily extended to $\rho \leq 1$), but lacks an explicit expression.
So, we will compare with it in Section~\ref{Zero Feedback: GAW Traffic}.
\end{remark}

\begin{remark}  \label{finitehorioznlowerbound}
   Due to the lack of an explicit expression of ${f}_n^*$, no lower bound of finite-horizon EWSAoI for zero-feedback scenarios under the GAW traffic is presented in~\cite{Munari2025TCOM}.
   To deal with this issue, we can obtain the following lower bound for such scenarios as a by-product, using~\eqref{f*_finitehorizon}, $U_n(T)\leq T\rho_n,\forall{n\in\mathcal{N}}$, and a similar idea as shown in~\eqref{Jpi}.
   \[
   \sum_{n \in \mathcal{N}}\alpha_n \left( \frac{T}{2} \frac{1+\varepsilon_n}{2 + (T\rho_n^* - 1)(1-\varepsilon_n)} + \frac{1}{2} \right),
   \]
    where $\rho_n^*$ can be obtained by solving
    \begin{align}
   \min_{ \rho_n \in [0,\rho], \forall n \in \mathcal{N}  } &\sum_{n \in \mathcal{N}}  \alpha_n \bigg(\frac{T}{2} \frac{1+\varepsilon_n}{2 + ( T\rho_n - 1)(1-\varepsilon_n)} + \frac{1}{2} \bigg), \notag \\
  \text{ s.t. } \ \ \quad &\sum_{n \in\mathcal{N}} \rho_n \leq \rho, \notag 
   \end{align}
   with the Karush–Kuhn–Tucker conditions~\cite{boyd2004convex}.
\end{remark}

\subsection{Optimal Policy for Many Cases}

Obviously, the solution of ${\hat{\bm{X}}_n}$ to problem~\eqref{fxproblem} \textit{cannot} be directly converted into an optimal policy, since the resulting transmission instants $\{l_{n,k}\}_{k \geq 1}$ may be non-integers and~\eqref{fxproblem} ignores the mutual restriction of different sources.
However, by the proof of Lemma~\ref{minimumfn} and Theorem~\ref{perfectlowerbound}, it is desirable for an optimal policy to spread out each source's transmissions as evenly as possible while maintaining the transmission rate as close to~\eqref{bound_rhon} as possible. 
For this consideration, we investigate when the lower bound in Theorem~\ref{perfectlowerbound} can be achieved, through designing an \textit{exact uniform scheduler} (EUS), which is a special class
of cyclic schedulers, with the individual transmission rates exactly equaling to~\eqref{bound_rhon}.
A policy is called an EUS~\cite{Li2021INFOCOM} if each source is scheduled periodically with periods that may be different across sources, i.e., 
$X_{n,k}  = X_{n,k'} $ for any $k,k'\in \mathbb{Z^+} \setminus \{1\}$ and each $n \in \mathcal{N}$, but $l_{n,k} \neq l_{n',k'}$ for any distinct $n, n' \in \mathcal{N}$ and any $k, k' \in \mathbb{Z^+}$.
For other cases, we defer the policy design issue to Section~\ref{Zero Feedback: Bernoulli Traffic}.

By~\eqref{eq: ERT} and~\eqref{bound_rhon}, it is easy to check that the lower bound in~\eqref{f*_infinitehorizon_bound} can be achieved if an EUS with the individual transmission rates exactly equaling to~\eqref{bound_rhon} is used.
However, EUSs have been constructed in~\cite{Li2021INFOCOM} only when any larger individual transmission rate can be divided by any smaller individual transmission rate.
Here, we establish an equivalent condition for the existence of EUSs and show that EUSs can be designed for many more cases than~\cite{Li2021INFOCOM}.

\begin{lemma}\label{lemma:tau}
Assume that $1/\rho_n^*$ in~\eqref{bound_rhon} is an integer for any $n\in \mathcal {N}$.
     An EUS with the individual transmission rates exactly equaling to~\eqref{bound_rhon} exists if and only if 
    \begin{equation} \label{eq:tau} 
        X_{n,1} \not\equiv X_{n',1} \mod \gcd(1/\rho_n^*,1/\rho_{n'}^*),
    \end{equation}
    for any distinct $n, n'\in \mathcal {N}$, where $\gcd$ denotes the greatest common divisor.
\end{lemma}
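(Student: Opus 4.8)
The plan is to reduce the existence of such an EUS to a question about the solvability of simultaneous linear congruences, which is then settled by the generalized Chinese Remainder Theorem (CRT).

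First I would pin down the periods. By definition of an EUS, the schedule of source $n$ satisfies $l_{n,k}-l_{n,k-1}=X_{n,k}=X_{n,2}$ for all $k\ge 2$, so source $n$ is transmitted at slots $l_{n,1}, l_{n,1}+X_{n,2}, l_{n,1}+2X_{n,2},\dots$, and its long-run transmission rate equals $1/X_{n,2}$. Requiring this rate to equal $\rho_n^*$ therefore forces $X_{n,2}=1/\rho_n^*=:p_n$, an integer by the standing assumption. Writing $l_{n,1}=X_{n,1}-1$, the set of slots used by source $n$ is precisely the residue class $S_n \triangleq \{t\in\mathbb{N}: t\equiv X_{n,1}-1 \pmod{p_n}\}$. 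Hence the only remaining freedom is the choice of the phases $X_{n,1}$, and the EUS requirement $l_{n,k}\ne l_{n',k'}$ for all distinct $n,n'$ (equivalently, the shared-channel constraint $\sum_{n\in\mathcal{N}}a_{n,t}\le 1$ for every $t$) is exactly the requirement that the classes $S_n$ be pairwise disjoint; since disjointness of a family of sets is a pairwise property, nothing more global needs to be checked.

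Next I would translate pairwise disjointness into \eqref{eq:tau}. For distinct $n,n'$, the classes $S_n$ and $S_{n'}$ meet iff the system $t\equiv X_{n,1}-1\pmod{p_n}$ and $t\equiv X_{n',1}-1\pmod{p_{n'}}$ has an integer solution; and because the solution set of such a system, when nonempty, is a full residue class modulo $\lcm(p_n,p_{n'})$, solvability over $\mathbb{Z}$ already yields a solution in $\mathbb{N}$. By the generalized CRT, the system is solvable iff $X_{n,1}-1\equiv X_{n',1}-1 \pmod{\gcd(p_n,p_{n'})}$, i.e. iff $X_{n,1}\equiv X_{n',1}\pmod{\gcd(1/\rho_n^*,1/\rho_{n'}^*)}$. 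Negating, $S_n\cap S_{n'}=\varnothing$ iff \eqref{eq:tau} holds for that pair.

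Combining the two steps gives both directions. For necessity: any EUS with the prescribed rates must have periods $p_n$ and collision-free phases, so its phases satisfy \eqref{eq:tau} for every distinct pair. For sufficiency: given phases $\{X_{n,1}\}$ satisfying \eqref{eq:tau} for all distinct $n,n'$, the schedule transmitting source $n$ in the slots of $S_n$ is periodic with $X_{n,k}=p_n$ for $k\ge2$, has individual rate exactly $\rho_n^*$, and is collision-free by the congruence computation above, so it is the desired EUS. I expect the only real subtlety — and it is a minor one — to be the bookkeeping of the first step: making precise that ``individual rate exactly $\rho_n^*$'' pins the period to the integer $1/\rho_n^*$ (so that the $\gcd$ in \eqref{eq:tau} is well defined) and that the EUS non-collision condition is equivalent to pairwise disjointness of residue classes rather than to something stronger. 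The arithmetic core — the solvability criterion for two congruences with non-coprime moduli — is classical and presents no difficulty, and it is precisely this that lets us go beyond the divisibility-chain construction of \cite{Li2021INFOCOM}.
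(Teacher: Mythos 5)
Your proposal is correct and follows essentially the same route as the paper's proof: both reduce the existence of the EUS to the non-collision of the arithmetic progressions $\{X_{n,1}+k/\rho_n^*\}_{k\in\mathbb{N}}$ and then invoke the classical solvability criterion for a pair of linear congruences with non-coprime moduli. Your version merely makes explicit two points the paper leaves implicit (that the exact-rate requirement pins the period to $1/\rho_n^*$, and that solvability over $\mathbb{Z}$ already gives a collision in $\mathbb{N}$), which is fine.
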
 
\begin{proof}
See Appendix~\ref{proofoflemmatau}.
\end{proof}

By Lemma~\ref{lemma:tau}, we shall prove that $\{X_{n,1}\}_{n\in\mathcal{N}}$ satisfying~\eqref{eq:tau} can be found for many cases.
Our proof is based on an idea of jointly applying rate splitting and modular arithmetic, which can be described by a splitting tree.

A \emph{splitting tree} is a plane rooted tree equipped with weights on both its nodes and edges, defined recursively as follows:
\begin{itemize}
    \item the root has weight $1$; 
    \item If an internal node $\nu$ has weight $1/g_{\nu}$ and $p_{\nu}$ children, where $g_{\nu} \geq 1$ is an integer, $p_{\nu} \geq 2$ is a prime, then
each child of this node is assigned weight $1/(g_{\nu} p_{\nu})$, and the $p_{\nu}$ edges from this node to its children are assigned weights $0,g_{\nu},\ldots,(p_{\nu}-1)g_{\nu}$, respectively.
    \end{itemize}

Fig.~\ref{fig:splittingtree} illustrates an example of a splitting tree of $4$ internal nodes and $10$ leaves.
The root or an internal node can be seen as a \textit{virtual} source, while a leaf can be seen as an actual source, and its weight can be seen as a supportable individual rate.

\begin{figure}
\centering
\includegraphics[width = 3in]{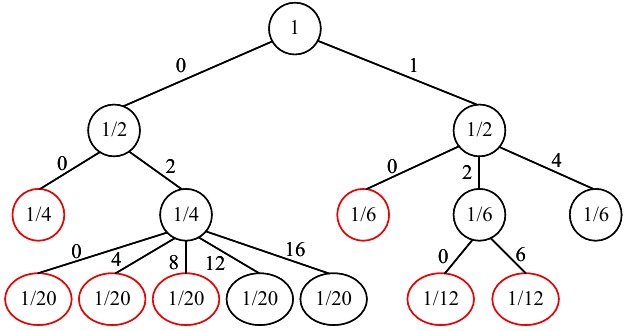}
    \caption{An example of a splitting tree.}  \label{fig:splittingtree} 
\end{figure}

\begin{lemma}\label{lemma:EUS2}
Assume that $1/\rho_n^*$ in~\eqref{bound_rhon} is an integer for any $n\in \mathcal {N}$.
An EUS with the individual transmission rates exactly equaling to~\eqref{bound_rhon} exists if
we can construct a splitting tree including $N$ leaves of weights $\{\rho^*_n\}_{n\in\mathcal{N}}$ and set the value of $X_{n,1}$ as the sum weights of the edges from the root to the leaf corresponding to weight $\rho^*_n$ for each $n\in\mathcal{N}$.
\end{lemma}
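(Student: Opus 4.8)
The plan is to verify that the construction in the statement produces a valid set $\{X_{n,1}\}_{n\in\mathcal N}$ satisfying the pairwise noncongruence condition~\eqref{eq:tau} of Lemma~\ref{lemma:tau}, so that the conclusion follows directly. First I would observe that the leaf weights of a splitting tree are exactly reciprocals of integers: by the recursive rule, a leaf at depth reached through internal nodes of prime labels $p_1,\ldots,p_m$ (with the root counted as $g=1$) has weight $1/(p_1 p_2\cdots p_m)$, so $1/\rho_n^*$ is the product of the prime labels along the root-to-leaf path. Consequently, for two leaves $n,n'$, if $\nu$ is their \emph{last common ancestor} on the tree, then $g_\nu$ — the reciprocal weight of $\nu$ — divides both $1/\rho_n^*$ and $1/\rho_{n'}^*$, and in fact one checks that $\gcd(1/\rho_n^*,1/\rho_{n'}^*)$ is a multiple of $g_\nu$; more precisely, since the path from $\nu$ onward toward $n$ starts with a distinct child-edge than the path toward $n'$, the two reciprocal weights share exactly the factor $g_\nu$ up to primes that could reappear in both subtrees. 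I would make this precise by showing $\gcd(1/\rho_n^*,1/\rho_{n'}^*)$ divides $p_\nu g_\nu$ is not needed; what is actually needed is the cleaner fact that $g_\nu \mid \gcd(1/\rho_n^*,1/\rho_{n'}^*)$ together with a congruence statement modulo $g_\nu$, which is all~\eqref{eq:tau} asks once we argue the difference $X_{n,1}-X_{n',1}$ is not divisible by $\gcd$.

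The heart of the argument is the congruence bookkeeping. Set $X_{n,1}$ equal to the sum of edge weights along the root-to-leaf path for $n$. Decompose this sum as $S_\nu + e_n^{\,\nu} g_\nu + (\text{tail}_n)$ and likewise $X_{n',1}=S_\nu + e_{n'}^{\,\nu} g_\nu + (\text{tail}_{n'})$, where $S_\nu$ is the common prefix sum up to $\nu$, the terms $e_n^{\,\nu}g_\nu$ and $e_{n'}^{\,\nu}g_\nu$ with $e_n^{\,\nu},e_{n'}^{\,\nu}\in\{0,1,\ldots,p_\nu-1\}$ and $e_n^{\,\nu}\neq e_{n'}^{\,\nu}$ are the (distinct) edge weights leaving $\nu$ toward the two children, and $\text{tail}_n$ collects the remaining edges, each of which — by the construction rule applied at deeper internal nodes — is a multiple of $p_\nu g_\nu$. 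Hence modulo $p_\nu g_\nu$ we get $X_{n,1}-X_{n',1}\equiv (e_n^{\,\nu}-e_{n'}^{\,\nu})g_\nu\not\equiv 0$, because $0<|e_n^{\,\nu}-e_{n'}^{\,\nu}|<p_\nu$. Since $p_\nu g_\nu$ divides $\gcd(1/\rho_n^*,1/\rho_{n'}^*)$ (both reciprocal weights pass through $\nu$ whose reciprocal weight is $g_\nu$, and each then acquires the extra prime $p_\nu$ from its respective child of $\nu$), the noncongruence modulo $p_\nu g_\nu$ upgrades to noncongruence modulo $\gcd(1/\rho_n^*,1/\rho_{n'}^*)$, which is precisely~\eqref{eq:tau}. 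Then Lemma~\ref{lemma:tau} yields the desired EUS with individual rates exactly $\{\rho_n^*\}_{n\in\mathcal N}$.

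The step I expect to be the main obstacle is establishing cleanly that $p_\nu g_\nu \mid \gcd(1/\rho_n^*,1/\rho_{n'}^*)$ — i.e., that the common ancestor $\nu$ really "captures" the full common factor relevant to~\eqref{eq:tau}. It is true that $p_\nu g_\nu$ divides each of $1/\rho_n^*$ and $1/\rho_{n'}^*$ individually, hence divides their gcd, so the divisibility I need is automatic; the subtlety is only in phrasing "last common ancestor" correctly for a plane rooted tree and making sure the edge-weight labels $0,g_\nu,\ldots,(p_\nu-1)g_\nu$ at $\nu$ are the ones appearing in both $X_{n,1}$ and $X_{n',1}$ with the \emph{same} prefix sum $S_\nu$ — which holds because the path from the root to $\nu$ is shared. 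A secondary point is to confirm that every edge strictly below $\nu$ on either path is a multiple of $p_\nu g_\nu$: an edge leaving an internal node $\mu$ that is a descendant of a child of $\nu$ has weight a multiple of $g_\mu$, and $g_\mu$ is itself a multiple of $p_\nu g_\nu$ since $\mu$ lies below that child (whose reciprocal weight is $p_\nu g_\nu$). Once these two structural facts are nailed down, the rest is the short modular computation above.
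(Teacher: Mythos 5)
Your proof is correct and follows essentially the same route as the paper's: both verify condition~\eqref{eq:tau} of Lemma~\ref{lemma:tau} by examining where the two root-to-leaf paths diverge and exploiting the distinct edge labels $0,g_\nu,\ldots,(p_\nu-1)g_\nu$ at that node. The paper only sketches this (treating the common-parent case and asserting it iterates to arbitrary pairs of leaves), so your explicit last-common-ancestor decomposition and the reduction modulo $p_\nu g_\nu \mid \gcd(1/\rho_n^*,1/\rho_{n'}^*)$ amount to a more complete write-up of the same argument rather than a different one.
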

\begin{proof}
See Appendix~\ref{proofoflemmaEUS}.
\end{proof}

If an EUS with individual rates $\{1/\rho_n^*\}_{n\in\mathcal{N}}$ can be designed by a splitting tree, by Lemma~\ref{lemma:EUS2}, we need to examine at most 
\[
\prod_{n\in \mathcal{N}'}\zeta\Big(\frac{1/\rho_n^*}{\gcd(\{1/\rho_n^*\}_{n\in\mathcal{N}'})}\Big)
\] 
possible splitting trees to obtain the required EUS.
Here, $\mathcal{N}'$ denotes a subset of $\mathcal{N}$ containing exactly one index for each distinct rate and $\zeta(\cdot)$ denotes the number of permutations of a multiset consisting of prime factors of an integer. 
This method is significantly more efficient than exhaustively searching for suitable $\{X_{n,1}\}_{n\in\mathcal{N}}$.
For example, when $\rho^*_1=1/4$, $\rho^*_2=1/6$, $\rho^*_3=\rho^*_4=\rho^*_5=1/20$, $\rho^*_6=\rho^*_7=1/12$ under $N=7$, we have to construct at most $4$ splitting trees and can choose $7$ leaves in Fig.~\ref{fig:splittingtree} to set $X_{1,1}=0$, $X_{1,2}=1$, $X_{1,3}=2$, $X_{1,4}=6$, $X_{1,5}=10$, $X_{1,6}=3$, $X_{1,7}=9$.

\begin{remark}
    The case that any larger individual rate can be divided by any smaller individual rate~\cite{Li2021INFOCOM} turns out to be a special instance of those covered by Lemma~\ref{lemma:EUS2}, where the corresponding splitting tree satisfies the property that the branching numbers of all internal nodes divide one another.
\end{remark}

\begin{remark}  \label{AUSs}
Based on previously known EUSs, cyclic schedulers have been constructed in~\cite{Li2021INFOCOM,Liyanaarachchi2025TIT} for more general individual transmission rates, under various design goals. 
Thus, our results in Lemma~\ref{lemma:EUS2} would provide more foundations to construct more efficient cyclic schedulers.
However, this is beyond the scope of our paper and is not discussed here. 
\end{remark}

\section{Zero-Feedback Scenarios: Bernoulli Traffic}  \label{Zero Feedback: Bernoulli Traffic}
This section concentrates on zero-feedback scenarios under the Bernoulli traffic.
A common method is to model the problem~\eqref{problem1} as a CMDP, however, it is difficult to solve this CMDP through standard approaches, due to infinite states under zero feedback, intractable accessibility of these states, intractable optimal Lagrange
multiplier, and the curse of dimensionality for multiple sources.
To deal with this issue, we develop a low-complexity drift-plus-penalty (DPP) policy based on the theory of Lyapunov optimization, where the constraint in~\eqref{problem1} is enforced by transforming it into a queue stability constraint. 

\subsection{Hybrid Lyapunov Function}
\label{Hybrid Lyapunov Function}
Let $Q_t$ denote a virtual queue that characterizes the historical usage of the transmission budget, which evolves as
\begin{equation} \label{Qt_multiuser}
Q_{t+1} = \max\left[Q_t - \rho + \sum_{n \in \mathcal{N}} a_{n,t}, 0\right],
\end{equation}
where $Q_0 = 0$.
According to~\cite[Definition 2.3, Section 4.4]{Neely2010Stochastic}, the constraint in~\eqref{problem1} is satisfied as long as $Q_t$ is mean rate stable, i.e., $\limsup_{T \rightarrow \infty} \frac{\mathbb{E}\left[Q_T \right]}{T} = 0$.
On the other hand, $h_{n,t}$ for each $n \in \mathcal{N}$ can be seen as an actual queue that evolves as~\eqref{eq:Evolution_AoI}.
Based on these queues, we define a \textit{hybrid Lyapunov function} as follows:
\begin{equation} 
L_t \triangleq \frac{1}{2}V Q_t^2 + \sum_{n \in \mathcal{N}} \theta_n h_{n,t},
\label{eq:lyapunov_multi}
\end{equation}
where $\theta_n$ is a positive real number to represent the normalized queue weight of $h_{n,t}$ that will be further determined, and $V$ is a positive real number that represents the tradeoff between convergence and performance.
A larger $V$ would speed up the convergence to guarantee the constraint.

\begin{remark}  \label{Lyapunovfunction}
One may wonder why we use a hybrid Lyapunov function rather than a traditional quadratic function~\cite{Kadota2018ACM,Luo2025Tcom} or a linear function~\cite{Kadota2021TMC,ZhaoINFOCOM2025}.
When computing the DPP function~\eqref{Dtht}, we will find that the quadratic on $Q_t$ leads to a dominant term $Q_t^2$, which will be essential to prove Theorem~\ref{thm:DPP}.  
Also, we will find that the linearity on the queue $h_{n,t}$ can be used to offset the impact of the penalty $h_{n,t+1}$ by choosing an appropriate $\theta_n$, as shown in Appendix~\ref{proofqueuestability}, 
which will be essential to not only prove Theorem~\ref{thm:DPP} but also design a threshold policy as in~\eqref{select_threshold} without the need of any truncation on $h_{n,t}$, unlike other Lyapunov optimization based approaches~\cite{Zakeri2024TWC,Fountoulakis2023TCOM}.
So, this hybrid form brings a ``magic'' to the subsequent analysis.
\end{remark}

\subsection{DPP Function}\label{DPP Function}

Let $\bm{o}_{n,t} \triangleq \left\{\{d_{n,t'}\}_{t' = 0}^{t}, \{a_{n,t'}\}_{t' = 0}^{t-1}, \{v_{n,t'}\}_{t' = 0}^{t} \right \}$ denote all historical information related to source $n$ before scheduling at slot $t$.
Let $\bm{o}_t \triangleq \left\{ \{\bm{o}_{n,t}\}_{n \in \mathcal{N}}, Q_t \right\}$.
The Lyapunov drift $\Delta_t^{ \bm{\pi} } $, defined as the conditional expected change in~\eqref{eq:lyapunov_multi} over one slot under a general policy $\bm{\pi} \in \bm{ \Pi^{\mathrm{}}}$, is given by
\begin{equation} \label{Deltat}
\Delta_t^{ \bm{\pi}} \triangleq \mathbb{E}^{ \bm{\pi} } \left [L_{t+1} - L_t \mid \bm{o}_t\right],
\end{equation}
where the expectation $\mathbb{E}^{ \bm{\pi} }[\cdot]$ is taken with respect to scheduling actions made in reaction to $\bm{o}_t$ under $\bm{\pi}$.

Letting the penalty in slot $t$ be the expected AoI at slot $t+1$, we define the DPP function under a policy $\bm{\pi} \in \bm{ \Pi^{\mathrm{}}}$ as
\begin{align}  
    {C}_t^{ \bm{\pi}} 
    & \triangleq \Delta_t^{ \bm{\pi}} + \sum_{n \in \mathcal{N}} \alpha_n \mathbb{E}^{ \bm{\pi} }[ h_{n,t+1} \mid \bm{o}_t] \notag \\
    & = \mathbb{E}^{ \bm{\pi} } \Big [ \frac{V}{2} (Q_{t+1}^2 - Q_t^2)+ \sum_{n \in \mathcal{N}} \theta_n ( h_{n,t+1} - h_{n,t}) \notag \\ 
    & \quad \quad \quad + \sum_{n \in \mathcal{N}} \alpha_n h_{n,t+1} \mid \bm{o}_t\Big].  \label{Dthtpi}
\end{align}
Define $\hat{h}_{n,t}  \triangleq \mathbb{E}^{ \bm{\pi} }\left[h_{n,t} \mid \bm{o}_{t} \right]$.
Substituting~\eqref{eq:Evolution_AoI} into~\eqref{Dthtpi} and then manipulating the resulting expression, we obtain
\begin{align}  \label{Dtht}
    {C}_t^{ \bm{\pi}}& 
     =\ \mathbb{E}^{ \bm{\pi} } \bigg [ \frac{V}{2} (Q_{t+1}^2 - Q_t^2) + \sum_{n \in \mathcal{N}} \alpha_n  h_{n,t} \notag \\ 
   & + \sum_{n \in \mathcal{N}} ( \theta_n + \alpha_n )\big( a_{n,t}(1-\varepsilon_n)( w_{n,t} - h_{n,t}) + 1\big) \mid \bm{o}_t \bigg] \notag \\
  & =\ \mathbb{E}^{ \bm{\pi} } \bigg [ \frac{V}{2} (Q_{t+1}^2 \! - \! Q_t^2) \mid \bm{o}_t\bigg]\! +\! 1 + \sum_{n \in \mathcal{N}} ( \theta_n + \alpha_n  \hat{h}_{n,t} ) \notag \\
     &+ \sum_{n \in \mathcal{N}} ( \theta_n\! + \! \alpha_n )(1\! -\! \varepsilon_n)( w_{n,t} \! -\! \hat{h}_{n,t}) \mathbb{E}^{ \bm{\pi} } \big [ a_{n,t} \mid \bm{o}_t\big],  
\end{align}
where the last equality is due to that $a_{n,t}$ is independent of $w_{n,t}$, $\hat{h}_{n,t}$ given $\bm{o}_{t}$ and $\sum_{n \in \mathcal{N}} \alpha_n = 1$.

By~\eqref{Qt_multiuser}, we have
$Q_{t+1}^2 - Q_t^2 \leq \rho^2 + 1 + 2Q_t\big(\sum_{n \in \mathcal{N}} a_{n,t} - \rho\big)$
due to $\sum_{n \in \mathcal{N}} a_{n,t} \leq 1$.
Substituting it into~\eqref{Dtht}, we obtain 
\begin{align}  \label{Dtbound_multiuser}
 {C}_t^{ \bm{\pi}} \leq & \frac{V(\rho^2 + 1) }{2} + 1 + VQ_t ( \mathbb{E}^{ \bm{\pi} } [ \sum_{n \in \mathcal{N}} a_{n,t} \mid \bm{o}_t] - \rho) \notag \\
 & 
+ \sum_{n \in \mathcal{N}} ( \theta_n + \alpha_n )(1-\varepsilon_n) \notag  \big( w_{n,t} - \hat{h}_{n,t} \big) \mathbb{E}^{ \bm{\pi} } \big [ a_{n,t} \mid \bm{o}_t\big] \notag \\
&+ \sum_{n \in \mathcal{N}} \big( \theta_n + \alpha_n \hat{h}_{n,t}  \big).
\end{align}
By opportunistically minimizing the RHS of~\eqref{Dtbound_multiuser}, we get a policy, denoted by $\bm{\pi}^{\text{DPP}}$, in the following form:
\begin{align} 
    \min_{\substack{ a_{n,t} \in \{0,1\}\\ \forall{n}\in\mathcal{N} } }  & \sum_{n \in \mathcal{N}} \big( VQ_t \!+\!  ( \theta_n\! +\! \alpha_n )(1\!-\!\varepsilon_n)( w_{n,t}\! -\! \hat{h}_{n,t}) \big) a_{n,t},\notag \\
    \text{s.t.} \quad &\sum_{n \in \mathcal{N}} a_{n,t} \leq 1. \label{problem}
\end{align}
In the following, we prove that, if $\{\theta_n\}_{n\in\mathcal{N}}$ is chosen suitably, $\bm{\pi}^{\text{DPP}}$ is able to satisfy the transmission rate constraint, enjoy a simple structure, and enjoy a performance guarantee.

\subsection{Queue Stability} \label{Queue Stability}

By choosing suitable $\{\theta_n\}_{n\in\mathcal{N}}$, we show that the virtual queue $Q_t$ is mean rate stable under $\bm{\pi}^{\text{DPP}}$. 
The proof employs a feasible stationary randomized policy $\bm{\pi}^{\text{ran}}$, which requires the AP to schedule source $n$ with probability $0 \leq \eta_n \leq \rho$ at each slot, as a reference policy to derive the results. 
Note that $\sum_{n \in \mathcal{N}} \eta_n = \rho$ under $\bm{\pi}^{\text{ran}}$.

\begin{theorem} \label{thm:DPP}

Under the policy $\bm{\pi}^{\text{DPP}}$ with 
\begin{align}  \label{theta_n}
   \theta_n = \frac{\alpha_n(1 - ( 1 - \varepsilon_n) \eta_n )}{(1-\varepsilon_n) \eta_n}, \quad \forall n\in \mathcal{N},
\end{align}
$Q_t$ is mean rate stable, i.e., 
$\limsup_{T \rightarrow \infty} \frac{\mathbb{E}^{ \bm{\pi}^{\text{DPP}} } \left[Q_T \right]}{T} = 0$.
\end{theorem}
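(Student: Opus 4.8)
The plan is to bound the Lyapunov drift of the virtual queue alone and invoke the standard mean-rate-stability argument from Lyapunov optimization theory. First I would isolate, from the DPP bound~\eqref{Dtbound_multiuser}, the terms governing the evolution of $Q_t$: define the single-queue Lyapunov function $\frac{1}{2}Q_t^2$ and write its one-slot drift under $\bm{\pi}^{\text{DPP}}$ as $\mathbb{E}^{\bm{\pi}^{\text{DPP}}}[\frac12(Q_{t+1}^2-Q_t^2)\mid\bm{o}_t]\le\frac{\rho^2+1}{2}+Q_t(\mathbb{E}^{\bm{\pi}^{\text{DPP}}}[\sum_{n}a_{n,t}\mid\bm{o}_t]-\rho)$. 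The key observation is that $\bm{\pi}^{\text{DPP}}$ minimizes the RHS of~\eqref{Dtbound_multiuser} over all feasible action choices at slot $t$, so in particular the value it attains is no larger than the value attained by the stationary randomized reference policy $\bm{\pi}^{\text{ran}}$ that schedules source $n$ with probability $\eta_n$ independently of $\bm{o}_t$.

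Next I would exploit the specific choice of $\theta_n$ in~\eqref{theta_n}. Under $\bm{\pi}^{\text{ran}}$, the coefficient $(\theta_n+\alpha_n)(1-\varepsilon_n)$ multiplying the age-gain term becomes, by~\eqref{theta_n}, exactly $\alpha_n/\eta_n$; more to the point, the purpose of this choice (as flagged in Remark~\ref{Lyapunovfunction} and to be verified in Appendix~\ref{proofqueuestability}) is that the AoI-related contributions from $\bm{\pi}^{\text{ran}}$ to the DPP bound telescope or cancel against the penalty in a way that leaves the drift-plus-penalty comparison clean. Plugging $\bm{\pi}^{\text{ran}}$ into~\eqref{Dtbound_multiuser} and using $\sum_n\eta_n=\rho$ makes the term $VQ_t(\sum_n\eta_n-\rho)=0$ vanish, so the comparison yields $C_t^{\bm{\pi}^{\text{DPP}}}\le\frac{V(\rho^2+1)}{2}+1+\sum_n(\theta_n+\alpha_n\hat h_{n,t})+(\text{finite AoI terms from }\bm{\pi}^{\text{ran}})$. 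Subtracting the common penalty $\sum_n\alpha_n\mathbb{E}[h_{n,t+1}\mid\bm{o}_t]$ from both sides and then taking full expectations and summing over $t=0,\dots,T-1$, the $h_{n,t}$ terms should collapse by the linearity of the hybrid Lyapunov function, leaving $\frac{V}{2}\mathbb{E}[Q_T^2]\le \frac{V}{2}\mathbb{E}[Q_0^2]+T\cdot\text{(const)}$, i.e. $\mathbb{E}[Q_T^2]=O(T)$. By Jensen, $\mathbb{E}[Q_T]\le\sqrt{\mathbb{E}[Q_T^2]}=O(\sqrt T)$, hence $\limsup_{T\to\infty}\mathbb{E}^{\bm{\pi}^{\text{DPP}}}[Q_T]/T=0$, which is mean rate stability.

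The main obstacle I anticipate is handling the AoI/penalty bookkeeping rigorously: unlike the virtual queue $Q_t$, the actual queues $h_{n,t}$ are unbounded and their one-slot increments under $\bm{\pi}^{\text{DPP}}$ and under $\bm{\pi}^{\text{ran}}$ are not obviously comparable term-by-term. The delicate point is to show that the choice of $\theta_n$ in~\eqref{theta_n} makes $\sum_n(\theta_n+\alpha_n)(1-\varepsilon_n)(w_{n,t}-\hat h_{n,t})\mathbb{E}^{\bm{\pi}^{\text{ran}}}[a_{n,t}\mid\bm{o}_t]+\sum_n\alpha_n\hat h_{n,t}$ minus the penalty stay uniformly bounded in $t$ — this requires relating $\hat h_{n,t}$ under $\bm{\pi}^{\text{DPP}}$ to a geometric-type steady-state expression and using $\mathbb{E}^{\bm{\pi}^{\text{ran}}}[a_{n,t}\mid\bm{o}_t]=\eta_n$ together with the identity $(\theta_n+\alpha_n)(1-\varepsilon_n)\eta_n=\alpha_n$ coming from~\eqref{theta_n}. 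Once this cancellation is established, the remaining steps are the routine telescoping-sum and Jensen argument described above; the rest of the details I would relegate to Appendix~\ref{proofqueuestability}.
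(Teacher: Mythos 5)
Your proposal is correct and follows essentially the same route as the paper's Appendix~\ref{proofqueuestability}: compare $\bm{\pi}^{\text{DPP}}$ against the stationary randomized policy in the bound~\eqref{Dtbound_multiuser}, use $\sum_{n}\eta_n=\rho$ to annihilate the $VQ_t$ term and the identity $(\theta_n+\alpha_n)(1-\varepsilon_n)\eta_n=\alpha_n$ to cancel the AoI terms, then telescope and conclude via $(\mathbb{E}[Q_T])^2\leq\mathbb{E}[Q_T^2]=O(T)$. The obstacle you anticipate in your last paragraph dissolves: the cancellation is exact and pointwise in $\bm{o}_t$, leaving only $\sum_n\alpha_n w_{n,t}$, whose expectation is bounded by $\sum_n\alpha_n(1-\lambda_n)/\lambda_n$ under Bernoulli arrivals independently of the policy, and the $\theta_n h_{n,T}\geq 0$ terms in $L_T$ are simply dropped, so no steady-state analysis of $\hat{h}_{n,t}$ is required.
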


\begin{proof}
See Appendix~\ref{proofqueuestability}.
\end{proof}

\subsection{Performance Guarantee}
Still relying on $\bm{\pi}^{\text{ran}}$ and carefully determined $\{\theta_n\}_{n\in\mathcal{N}}$ in~\eqref{theta_n}, we obtain a performance guarantee of $\bm{\pi}^{\text{DPP}}$.
\begin{theorem}   \label{thm:upp}
Under the policy $\bm{\pi}^{\text{DPP}}$ with~\eqref{theta_n},
we have
\begin{align} \label{eq:upper}
J^{\bm{\pi}^{\text{DPP}}}
    \leq \frac{V(\rho^2 + 1)}{2}  + \sum_{n \in \mathcal{N}} \alpha_n  \left( \frac{1 }{(1-\varepsilon_n) \eta_n^*} +  \frac{1-\lambda_n}{\lambda_n} \right),
\end{align}
where
\begin{equation}  \label{optimaletan}
\eta_n^* = \frac{\rho \sqrt{ \alpha_n/(1-\varepsilon_n)}}{\sum\limits_{n'\in\mathcal{N}} \sqrt{\alpha_{n'}/(1-\varepsilon_{n'})}}, \quad \forall n \in \mathcal{N}.
\end{equation}
\end{theorem}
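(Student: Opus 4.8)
The plan is a drift-plus-penalty argument that crucially exploits the specific weights in~\eqref{theta_n}. Since~\eqref{Dtbound_multiuser} holds for every policy and $\bm{\pi}^{\text{DPP}}$ is defined in~\eqref{problem} by minimizing the action-dependent part of the right-hand side of~\eqref{Dtbound_multiuser} over all feasible actions, I would compare $\bm{\pi}^{\text{DPP}}$ against the feasible stationary randomized policy $\bm{\pi}^{\text{ran}}$, for which $\mathbb{E}^{\bm{\pi}^{\text{ran}}}[a_{n,t}\mid\bm{o}_t]=\eta_n$ and $\sum_{n\in\mathcal{N}}\eta_n=\rho$. Replacing $\mathbb{E}^{\bm{\pi}}[a_{n,t}\mid\bm{o}_t]$ in~\eqref{Dtbound_multiuser} by $\eta_n$ can only enlarge the right-hand side, so
\[
C_t^{\bm{\pi}^{\text{DPP}}}\le \tfrac{V(\rho^2+1)}{2}+1+VQ_t\Big(\textstyle\sum_{n\in\mathcal{N}}\eta_n-\rho\Big)+\sum_{n\in\mathcal{N}}\Big[(\theta_n+\alpha_n)(1-\varepsilon_n)\eta_n(w_{n,t}-\hat{h}_{n,t})+\theta_n+\alpha_n\hat{h}_{n,t}\Big].
\]
The term in $Q_t$ vanishes because $\sum_{n}\eta_n=\rho$, and --- this is the key observation --- the choice~\eqref{theta_n} makes $(\theta_n+\alpha_n)(1-\varepsilon_n)\eta_n=\alpha_n$, so the coefficient of the policy-dependent conditional age $\hat{h}_{n,t}$ equals $\alpha_n-(\theta_n+\alpha_n)(1-\varepsilon_n)\eta_n=0$ and $\hat{h}_{n,t}$ cancels out completely, leaving $C_t^{\bm{\pi}^{\text{DPP}}}\le \tfrac{V(\rho^2+1)}{2}+1+\sum_{n\in\mathcal{N}}(\alpha_n w_{n,t}+\theta_n)$.

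Next I would take total expectations, write $C_t^{\bm{\pi}^{\text{DPP}}}=\Delta_t^{\bm{\pi}^{\text{DPP}}}+\sum_{n\in\mathcal{N}}\alpha_n\mathbb{E}^{\bm{\pi}^{\text{DPP}}}[h_{n,t+1}\mid\bm{o}_t]$, use the tower rule to turn $\mathbb{E}[\Delta_t^{\bm{\pi}^{\text{DPP}}}]$ into $\mathbb{E}[L_{t+1}]-\mathbb{E}[L_t]$, and sum over $t=0,\dots,T-1$ so the Lyapunov terms telescope. Using $L_t\ge0$, $L_0=\sum_{n\in\mathcal{N}}\theta_n$ (from $Q_0=0$ and $h_{n,0}=1$), and the index shift $\sum_{t=0}^{T-1}\mathbb{E}[h_{n,t+1}]\ge\sum_{t=0}^{T-1}\mathbb{E}[h_{n,t}]-\mathbb{E}[h_{n,T}]+h_{n,0}$ with the nonnegative $\mathbb{E}[h_{n,T}]$ dropped, dividing by $T$ gives $\tfrac1T\sum_{t=0}^{T-1}\sum_{n\in\mathcal{N}}\alpha_n\mathbb{E}[h_{n,t}]\le \tfrac{V(\rho^2+1)}{2}+1+\sum_{n\in\mathcal{N}}\theta_n+\tfrac1T\sum_{t=0}^{T-1}\sum_{n\in\mathcal{N}}\alpha_n\mathbb{E}[w_{n,t}]+O(1/T)$. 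Taking $\limsup_{T\to\infty}$ removes the $O(1/T)$ term, and the Bernoulli local-age recursion $\mathbb{E}[w_{n,t+1}]=(1-\lambda_n)(\mathbb{E}[w_{n,t}]+1)$ with $w_{n,0}=0$ yields $\mathbb{E}[w_{n,t}]\uparrow\tfrac{1-\lambda_n}{\lambda_n}$, so the running average of $\mathbb{E}[w_{n,t}]$ converges to $\tfrac{1-\lambda_n}{\lambda_n}$. Substituting~\eqref{theta_n} and simplifying $1+\sum_{n\in\mathcal{N}}\theta_n=\sum_{n\in\mathcal{N}}\tfrac{\alpha_n}{(1-\varepsilon_n)\eta_n}$ produces $J^{\bm{\pi}^{\text{DPP}}}\le \tfrac{V(\rho^2+1)}{2}+\sum_{n\in\mathcal{N}}\alpha_n\big(\tfrac{1}{(1-\varepsilon_n)\eta_n}+\tfrac{1-\lambda_n}{\lambda_n}\big)$, valid for every feasible $\{\eta_n\}$ with $\sum_{n\in\mathcal{N}}\eta_n=\rho$.

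To finish I would minimize this bound over $\{\eta_n\}$ subject to $\sum_{n\in\mathcal{N}}\eta_n=\rho$ and $0\le\eta_n\le\rho$: a Lagrange-multiplier argument (equivalently Cauchy--Schwarz) applied to $\sum_{n\in\mathcal{N}}\tfrac{\alpha_n}{(1-\varepsilon_n)\eta_n}$ gives $\eta_n\propto\sqrt{\alpha_n/(1-\varepsilon_n)}$, i.e.\ $\eta_n=\eta_n^*$ as in~\eqref{optimaletan} (with $\eta_n^*\le\rho$ automatic), which is exactly~\eqref{eq:upper}. The telescoping bookkeeping and this final convex optimization are routine. The step I expect to be the main obstacle is recognizing that the penalty contribution $\hat{h}_{n,t}$ --- a hard-to-control conditional expectation under zero feedback --- must be made to cancel \emph{exactly} rather than be crudely bounded; this is precisely what dictates the hybrid form of $L_t$ and the precise weights~\eqref{theta_n} (the ``magic'' mentioned in Remark~\ref{Lyapunovfunction}). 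A secondary care point is that one must actually evaluate $\limsup_{T\to\infty}\tfrac1T\sum_{t=0}^{T-1}\mathbb{E}[w_{n,t}]$, which here coincides with the stationary mean $\tfrac{1-\lambda_n}{\lambda_n}$ by monotone convergence.
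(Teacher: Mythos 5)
Your proposal is correct and follows essentially the same route as the paper: comparing $\bm{\pi}^{\text{DPP}}$ against $\bm{\pi}^{\text{ran}}$ in the drift-plus-penalty bound, exploiting that~\eqref{theta_n} makes $(\theta_n+\alpha_n)(1-\varepsilon_n)\eta_n=\alpha_n$ so that $\hat{h}_{n,t}$ cancels exactly, telescoping, evaluating $\limsup_T \frac{1}{T}\sum_t \mathbb{E}[w_{n,t}]=\frac{1-\lambda_n}{\lambda_n}$, and finally optimizing over $\{\eta_n\}$ to obtain~\eqref{optimaletan}. The only cosmetic difference is that the paper bounds the boundary terms via $\mathbb{E}[L_T]\ge\sum_n\theta_n$ and $\mathbb{E}[h_{n,T}]\ge 1$ rather than $L_t\ge 0$, which changes nothing in the limit.
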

\begin{proof}
See Appendix~\ref{proofoftheoremthmupp}.
\end{proof}

\begin{remark}  \label{lyapunovupperbound}
It is shown in~\cite{Kadota2021TMC,Ji2024TCOM} that 
\[
J^{\bm{\pi}^{\text{ran*}}}=\sum_{n \in \mathcal{N}} \alpha_n  \left( \frac{1 }{(1-\varepsilon_n) \eta_n^*} +  \frac{1-\lambda_n}{\lambda_n} \right),
\] 
where $\bm{\pi}^{\text{ran*}}$ denotes an optimal stationary randomized policy.
Comparing $J^{\bm{\pi}^{\text{ran*}}}$ and~\eqref{eq:upper}, we obtain
\begin{equation}
    J^{\bm{\pi}^{\text{DPP}}} \leq  \frac{V(\rho^2 + 1)}{2} + J^{\bm{\pi}^{\text{ran*}}},
\end{equation}
which implies that the upper bound of $J^{\bm{\pi}^{\text{DPP}}}$ can be arbitrarily close to $J^{\bm{\pi}^{\text{ran*}}}$ even with zero feedback.
When $V=0$ and $\rho=1$, this upper bound reduces to that in~\cite{Kadota2021TMC}.
Establishing a tighter upper bound is challenging since we \textit{cannot} use properties such as renewal intervals to simplify the analysis.
The superior performance of $\bm{\pi}^{\text{DPP}}$ will be validated via simulations.
\end{remark}

\subsection{Policy Structure}\label{Policy Structure}
 
Substituting~\eqref{optimaletan} and~\eqref{theta_n} into problem~\eqref{problem}, we obtain the following optimization problem:
\begin{equation}
\begin{aligned} \label{problem_simple}
    \min_{a_{n,t} \in \{0,1\}, \forall{n\in\mathcal{N}}} \quad &\sum_{n \in \mathcal{N}} \Big( VQ_t + \alpha_{n}\frac{w_{n,t} - \hat{h}_{n,t}} {\eta_n^*} \Big) a_{n,t} ,\\
    \text{s.t.}\ \ \quad\quad\quad&\sum_{n \in \mathcal{N}} a_{n,t} \leq 1. 
\end{aligned}
\end{equation}
We can obtain $\bm{\pi}^{\text{DPP}}$ in the following threshold-based form:
\begin{align}  \label{select_threshold}
   a_{n,t} \in 
    \begin{cases}
       1,   & \text{if } VQ_t \leq \alpha_{n}\frac{\hat{h}_{n,t} - w_{n,t}} {\eta_n^*}, m_t=n, \\
       0,   & \text{otherwise},
    \end{cases}
\end{align}
where $m_t \in\text{arg min}_{n\in\mathcal{N}} \left( VQ_t + \alpha_{n}\frac{w_{n,t} - \hat{h}_{n,t}} {\eta_n^*} \right)$ 
and $\hat{h}_{n,t}$ can be computed recursively as~\cite{Ji2024TCOM,Munari2025TCOM}
\begin{align}  \label{evolution_hath}
    \hat{h}_{n,t+1} = 
    \begin{cases}
        \varepsilon_n (\hat{h}_{n,t} + 1) + (1 - \varepsilon_n)(w_{n,t} + 1),  & \text{if } a_{n,t} = 1, \\
        \hat{h}_{n,t} + 1,  & \text{if } a_{n,t} = 0.
    \end{cases}
\end{align}

\begin{remark} \label{threshold structure}
We observe from~\eqref{Qt_multiuser} and \eqref{evolution_hath} that if the AP schedules source $n$ in slot $t$, $Q_t$ increases by $1 - \rho$ while $\hat{h}_{n,t}$ changes by $(1-\varepsilon_n)( w_{n,t} - \hat{h}_{n,t} ) + 1$; otherwise, $Q_t$ decreases with $\rho$ while $\hat{h}_{n,t}$ increases by $1$. 
So, under the threshold structure of $\bm{\pi}^{\text{DPP}}$ as stated in~\eqref{select_threshold}, $Q_t$ and $\hat{h}_{n,t}$ are controlled as two mutually counterbalancing processes. 
Note that, as $\lambda_n$ increases, $w_{n,t}$ is more frequently reduced to zero, which makes $\hat{h}_{n,t}$ decrease more quickly.
So, the term $\hat{h}_{n,t} - w_{n,t}$, called conditional expected age-based weight~\cite{Kadota2021TMC}, is used in~\eqref{select_threshold} together with $Q_t$ to provide an index that reasonably measures the transmission preference.
\end{remark}

\section{Extension to Imperfect-Feedback Scenarios}\label{Extension to Imperfect-Feedback Scenarios}

This section extends the DPP policy proposed in Section~\ref{Zero Feedback: Bernoulli Traffic} to consider general imperfect feedback, which incorporates arbitrary feedback delays and feedback errors.
Note that Sections~\ref{Hybrid Lyapunov Function}--\ref{Policy Structure} are applicable for general imperfect feedback.
since they are independent of the feedback details.
However, computing $\hat{h}_{n,t}$ under general imperfect feedback depends on feedback delays, errors, and mechanisms, which would lead to more complicated observations and calculations than that for zero feedback as shown in~\eqref{evolution_hath}.

Let $\bm{b}_{n,t} \triangleq [b_{n,t}(1), b_{n,t}(2), \dots ]$ denote the probability distribution of $h_{n,t}$, where $b_{n,t}(h) \triangleq \Pr(h_{n,t} = h \mid \bm{b}_{n,t-1}, w_{n,t-1}, \bm{o}_{n,t} )$ denotes the conditional probability of $h_{n,t} = h$ given $\bm{b}_{n,t-1}, w_{n,t-1}, \bm{o}_{n,t}$.
We have $\hat{h}_{n,t} = \sum_{h\in\mathbb{Z^+}}  h b_{n,t}(h)$.
As the feedback delay is $D_n$, at the beginning of each slot $t + 1$, the AP is able to update $\bm{b}_{n,t - D_n + 1}$ based on $\bm{b}_{n,t-D_n},w_{n,t-D_n},a_{n,t-D_n},v_{n,t+1}$ and then update $\{\bm{b}_{n,t'}\}_{t' = t - D_n +2}^{t+1}$ based on the updated $\bm{b}_{n,t - D_n + 1}$, $\{a_{n,t'}\}_{t' = t - D_n +1}^{t}$, $\{w_{n,t'}\}_{t' = t - D_n +1}^{t}$.
We introduce this procedure as follows.

\emph{\underline{Step 1:}} By the Bayes’ rule, for each $h' \in \mathbb{Z^+}$, at the beginning of slot $t + 1$, the AP can update $b_{n,t-D_n+1}(h')$ as~\eqref{ht_belief},
\begin{figure*}
  \begin{align} \label{ht_belief}
    b_{n,t - D_n+1}(h') 
    & \triangleq  \Pr(h_{n,t-D_n+1} = h' \mid \bm{b}_{n,t-D_n}, w_{n,t-D_n}, a_{n,t-D_n}, v_{n,t+1} ) \notag \\
    & = \frac{\sum_{h\in\mathbb{Z^+}}b_{n,t-D_n}(h) \xi_{n,t-D_n+1}(h' \mid h,w,a)\beta_{n,t + 1}(v \mid h', h, w, a) }{\sum_{h''\in\mathbb{Z^+}}\sum_{h\in\mathbb{Z^+}}b_{n,t-D_n}(h) \xi_{n,t-D_n+1}(h'' \mid h,w,a)\beta_{n,t+1}(v \mid h'',h,w,a)},
  \end{align}
\hrulefill %
\end{figure*}
where
\begin{align} \label{xi}
 &\xi_{n,t-D_n+1}(h' \mid h,w,a) \triangleq  \Pr(h_{n,t-D_n+1} = h' \mid \notag \\ &\quad\quad\quad\quad\quad  h_{n,t-D_n} = h, w_{n,t-D_n} = w,a_{n,t-D_n} = a) \notag \\
&  = \begin{cases}
      \varepsilon_n,   & \text{if }  a = 1, h' = h + 1, \\
      1 - \varepsilon_n, & \text{if } a = 1, h' = w + 1,   \\
      1, & \text{if } a = 0, h' = h + 1, \\
      0, & \text{otherwise},
  \end{cases} 
\end{align}
denotes the state transition function that is applicable for all feedback mechanisms, and
    \begin{align} \label{function_observation}
    & \beta_{n,t+1}(v \mid h',h,w,a)  \triangleq \Pr(v_{n,t+1} = v \mid \hat{h}_{t - D_n + 1} = h',\notag \\ & \quad \quad \quad  \hat{h}_{t-D_n} = h, w_{n,t-D_n} = w, a_{n,t-D_n} = a) ,
\end{align}
denotes the observation function that obviously depends on the implemented feedback mechanism.

\emph{\underline{Step 2:}} Due to the feedback delay $D_n$, the transmission outcomes of source $n$ of slots $t - D_n + 1,\ldots,t$ are completely unobservable to the AP at the beginning of slot $t+1$.
Thus, the updating of $\{\bm{b}_{n,t'}\}_{t' = t - D_n +2}^{t+1}$ can be reduced to updating $\{\hat{h}_{n,t'}\}_{t' = t - D_n +2}^{t+1}$ by~\eqref{evolution_hath}.

In the following, we derive $\beta_{n,t+1}(v \mid h',h,w,a)$ and $\hat{h}_{n,t-D_n+1}$ under different feedback mechanisms.

\subsection{ACKs Mechanism}
\label{Calculation of hath with ACKs}

Under the ACKs mechanism, we have
\begin{align} \label{function_observation_ack}
    & \beta_{n,t+1}^{\text{ACKs}}(v \mid h',h,w,a) \notag \\
    & = 
    \begin{cases}
        1,  & \text{if } a = 0, v = 0, h' = h + 1, \\
            &  \text{or } a = 1, v = 0, h' = h + 1,\\
        \sigma_n,    &  \text{if } a = 1, v = 0, h' = w + 1,\\
        1-\sigma_n, &  \text{if } a = 1, v = 1, h' = w + 1,\\
        0, & \text{otherwise}.
    \end{cases}
\end{align}
Substituting~\eqref{xi} and~\eqref{function_observation_ack} into~\eqref{ht_belief}, the AP can update $\hat{h}_{n, t - D_n+1}$ as~\eqref{hDNack}.
\begin{figure*}
\begin{align}\label{hDNack}
    \hat{h}_{n,t-D_n+1} 
    & = \frac{\sum_{h'\in\mathbb{Z^+}} h' \sum_{h\in\mathbb{Z^+}}b_{n,t-D_n}(h) \xi_{n,t-D_n+1}(h' \mid h,w,a)\beta_{n,t+1}^{\text{ACKs}}(v \mid h', h, w, a) }{\sum_{h''\in\mathbb{Z^+}}\sum_{h\in\mathbb{Z^+}}b_{n,t-D_n}(h) \xi_{n,t-D_n+1}(h'' \mid h,w,a)\beta_{n,t+1}^{\text{ACKs}}(v \mid h'',h,w,a)}\notag \\
    & = 
    \begin{cases}
        \frac{(w + 1)(1-\varepsilon_n) (1-\sigma_n) \sum_{h\in\mathbb{Z^+}}b_{n,t-D_n}(h)}{(1-\varepsilon_n) (1-\sigma_n)\sum_{h\in\mathbb{Z^+}}b_{n,t-D_n}(h) }, & \text{if } a = 1, v = 1,  \\
        \frac{\varepsilon_n  \sum_{h\in\mathbb{Z^+}} (h + 1) b_{n,t-D_n}(h) + (1-\varepsilon_n)\sigma_n(w + 1)\sum_{h\in\mathbb{Z^+}}b_{n,t-D_n}(h) }{( 1 - (1-\varepsilon_n)(1-\sigma_n) )\sum_{h\in\mathbb{Z^+}}b_{n,t-D_n}(h)} , & \text{if } a = 1, v = 0,  \\
        \frac{\sum_{h\in\mathbb{Z^+}} (h+1) b_{n,t-D_n}(h)}{\sum_{h\in\mathbb{Z^+}}b_{n,t-D_n}(h)}, & \text{if } a = 0.
    \end{cases}
\end{align}

\end{figure*}
Considering that $\sum_{h\in\mathbb{Z^+}}h b_{n,t-D_n}(h) = \hat{h}_{n,t-D_n}$ and $\sum_{h\in\mathbb{Z^+}}b_{n,t-D_n}(h) = 1$,
we can compute $\hat{h}_{n,t-D_n+1}$ exactly without relying on the belief of $\hat{h}_{n,t-D_n}$. 
\begin{align} \label{htD_ACK}
    & \hat{h}_{n,t-D_n+1}= \notag \\ &
    \begin{cases}
        w_{n,t-D_n} + 1,   & \text{if } a_{n,t-D_n} = 1, v_{n,t} = 1, \\
        \frac{\varepsilon_n ( \hat{h}_{n,t-D_n} + 1 ) }{1 - (1-\varepsilon_n)(1-\sigma_n)} \\+ \frac{(1 - \varepsilon_n)\sigma_n ( w_{n,t-D_n} + 1 )}{1 - (1-\varepsilon_n)(1-\sigma_n)},                   & \text{if } a_{n,t-D_n} = 1, v_{n,t} = 0,  \\
\hat{h}_{n,t-D_n} + 1,                   & \text{if } a_{n,t-D_n} = 0.
    \end{cases}
\end{align}

\subsection{ACKs/NACKs Mechanism}

Under the ACKs/NACKs mechanism, we have
\begin{align} \label{beta_ackandnack}
    &\beta_{n,t+1}^{\text{ACKs/NACKs}}(v \mid h',h,w,a) \notag \\
    & = 
    \begin{cases}
        1,  & \text{if } a = 0, v = 0, h' = h + 1, \\
        1 - \sigma_n    &  \text{if } a = 1, v = -1, h' = h + 1,\\
                       &  \text{or } a = 1, v = 1, h' = w + 1,\\
        \sigma_n    &  \text{if } a = 1, v = 0, h' = h + 1,\\
                   &  \text{or } a = 1, v = 0, h' = w + 1,\\       
        0, & \text{otherwise}.
    \end{cases}
\end{align}
Substituting~\eqref{xi} and~\eqref{beta_ackandnack} into~\eqref{ht_belief}, the AP can update $\hat{h}_{n, t - D_n+1}$ as~\eqref{htDn_ACKNACK}.
\begin{figure*}
    \begin{align} \label{htDn_ACKNACK}
    \hat{h}_{n,t-D_n+1} 
    & = \frac{\sum_{h'\in\mathbb{Z^+}} h' \sum_{h\in\mathbb{Z^+}}b_{n,t-D_n}(h) \xi_{n,t-D_n+1}(h' \mid h,w,a)\beta_{n,t+1}^{\text{ACKs/NACKs}}(v \mid h', h, w, a) }{\sum_{h''\in\mathbb{Z^+}}\sum_{h\in\mathbb{Z^+}}b_{n,t-D_n}(h) \xi_{n,t-D_n+1}(h'' \mid h,w,a)\beta_{n,t+1}^{\text{ACKs/NACKs}}(v \mid h'',h,w,a)}\notag \\
    & = 
    \begin{cases}
        \frac{(w + 1)(1-\varepsilon_n) (1-\sigma_n) \sum_{h\in\mathbb{Z^+}}b_{n,t-D_n}(h)}{(1-\varepsilon_n) (1-\sigma_n)\sum_{h\in\mathbb{Z^+}}b_{n,t-D_n}(h) }, & \text{if } a = 1, v = 1,  \\
        \frac{\varepsilon_n\sigma_n \sum_{h\in\mathbb{Z^+}} (h + 1) b_{n,t-D_n}(h) + (1-\varepsilon_n)\sigma_n(w + 1)\sum_{h\in\mathbb{Z^+}}b_{n,t-D_n}(h) }{\sigma_n\sum_{h\in\mathbb{Z^+}}b_{n,t-D_n}(h)} , & \text{if } a = 1, v = 0,  \\
        \frac{ \varepsilon_n(1 - \sigma_n) \sum_{h\in\mathbb{Z^+}} (h+1) b_{n,t-D_n}(h)}{\varepsilon_n (1 - \sigma_n) \sum_{h\in\mathbb{Z^+}}b_{n,t-D_n}(h)}, & \text{if } a = 1, v = -1, \\
        \frac{\sum_{h\in\mathbb{Z^+}} (h+1) b_{n,t-D_n}(h)}{\sum_{h\in\mathbb{Z^+}}b_{n,t-D_n}(h)}, & \text{if } a = 0.
    \end{cases}
\end{align}
\end{figure*}
Considering that $\sum_{h\in\mathbb{Z^+}}h b_{n,t-D_n}(h) = \hat{h}_{n,t-D_n}$ and $\sum_{h\in\mathbb{Z^+}}b_{n,t-D_n}(h) = 1$,
we can compute $\hat{h}_{n,t-D_n+1}$ exactly without relying on the belief of $\hat{h}_{n,t-D_n}$. 
\begin{align} \label{htD_ACKNACK}
    &\hat{h}_{n,t-D_n+1} = \notag \\
    &\begin{cases}
        w_{n,t-D_n} + 1,   & \text{if } a_{n,t-D_n} = 1, v_{n,t+1} = 1,  \\
        {\varepsilon_n ( \hat{h}_{n,t-D_n} + 1 ) } +\\ (1 - \varepsilon_n) ( w_{n,t-D_n} + 1 ),                   & \text{if } a_{n,t-D_n} = 1, v_{n,t+1} = 0,  \\
        \hat{h}_{n,t-D_n} + 1,                   & \text{if } a_{n,t-D_n} = 1, v_{n,t+1} = -1, \\
        &\text{or } a_{n,t-D_n} = 0.
    \end{cases}
\end{align}
Note that~\eqref{htD_ACKNACK} is independent of $\sigma_n$ due to the belief normalization in~\eqref{htDn_ACKNACK}.
This is different from~\eqref{htD_ACK} for ACKs mechanism.

\subsection{Complexity Simplification}

By~\eqref{evolution_hath},~\eqref{htD_ACK},~\eqref{htD_ACKNACK}, the computational complexity of $\bm{\pi}^{\text{DPP}}$ under imperfect feedback is $\mathcal{O}\big(N + \sum_n D_n\mathbf{1}_{D_n < \infty, \sigma_n <1}\big)$ where $\mathbf{1}_{[\cdot]}$ denotes the indicator function, for both the ACKs and ACKs/NACKs mechanisms, which is undesirable for large $N$ or $D_n$.
To reduce such complexity, we use the form in~\cite{Ji2024TCOM} to simplify the updating of $\hat{h}_{n,t+1}$ at the beginning of slot $t+1$ into~\eqref{evolutionhn_simple},
\begin{figure*}
\begin{align} \label{evolutionhn_simple}
    \hat{h}_{n,t+1}\! =\! & \ \big(1\! - \!a_{n,t}(1\! -\! \varepsilon_n) \big)\big( \hat{h}_{n,t} \!+\!\varepsilon_n^{ \sum_{j = t - D_n + 1}^{t-1} a_{n,j} } \big( \hat{h}_{n,t-D_n+1}\! -\! \big(a_{n,t-D_n}(1\!-\!\varepsilon_n)(w_{n,t-D_n}\!-\!\hat{h}_{n,t-D_n})\! +\! \hat{h}_{n,t-D_n}\! +\! 1 \big) \big)\! - \! w_{n,t} \big)\notag \\ & + w_{n,t} + 1,
\end{align}
\hrulefill %
\end{figure*}
where $\hat{h}_{n,t-D_n+1}$ is calculated using~\eqref{htD_ACK} and~\eqref{htD_ACKNACK} for the ACKs and ACKs/NACKs mechanisms, respectively, which account for feedback errors that are not considered in~\cite{Ji2024TCOM}. 
By~\eqref{htD_ACK},~\eqref{htD_ACKNACK},~\eqref{evolutionhn_simple}, we can obtain the online computational complexity of $\bm{\pi}^{\text{DPP}}$ as listed in Table~\ref{complexity}.
We also compare it with that of other policies, which indicates that, interestingly, introducing arbitrary feedback delays and arbitrary feedback errors into $\bm{\pi}^{\text{DPP}}$ leads to no extra computational complexity compared to assuming zero feedback.
Note that cyclic schedulers~\cite{Li2021INFOCOM,Liyanaarachchi2025TIT} have the lowest complexity, but have the most limited applicability.  
 
\begin{table}[ht] 
\centering  
\footnotesize
\caption{Comparison of online computational complexity.}
\label{complexity}
\begin{tabular}{@{\hspace{1pt}}c@{\hspace{1pt}}c@{\hspace{1pt}}c}
\toprule
{}    & Complexity   & Applicability \\
\midrule
{\makecell{Our EUS and other \\  cyclic schedulers~\cite{Li2021INFOCOM,Liyanaarachchi2025TIT}}}     &    $\mathcal{O}(1)$          & \makecell{ zero feedback \\ GAW traffic}   \\
\midrule
\makecell{$\bm{\pi}^{\text{DPP}}$\\ (Section~\ref{Zero Feedback: Bernoulli Traffic})}     &    $\mathcal{O}(N)$               & \makecell{ zero feedback \\ Bernoulli traffic}\\
\midrule
\makecell{$\bm{\pi}^{\text{DPP}}$ \\ (Section~\ref{Extension to Imperfect-Feedback Scenarios})}     &    $\mathcal{O}(N)$               & \makecell{  delayed/erroneous feedback \\ Bernoulli traffic}  \\
\midrule
\makecell{Lagrange-cost \\greedy~\cite{Ji2024TCOM} }    &    $\mathcal{O}(N)$        & \makecell{  delayed  feedback \\ Bernoulli traffic} \\
\bottomrule
\end{tabular}
\end{table}

\begin{remark}
    For each user $n$, when $\sigma_n$, $D_n$ take specific values, both~\eqref{htD_ACK} and~\eqref{htD_ACKNACK} can be reduced to~\eqref{evolution_hath} for zero feedback~\cite{Munari2025TCOM}, while~\eqref{htD_ACKNACK} can be reduced to that for delayed feedback~\cite{Ji2024TCOM} and erroneous feedback~\cite{PyttelICC2024} under ACKs/NACKs. 
    Note that~\cite{PyttelICC2024} maintained a belief of $h_{n,t}$, which is unnecessary for designing $\bm{\pi}^{\text{DPP}}$.
\end{remark}

\begin{remark}  \label{reducetomax-weight}
    By setting $V = 0,\rho = 1$, the policy $\bm{\pi}^{\text{DPP}}$ in~\eqref{select_threshold} can be reduced to the form of the max-weight policy~\cite{Kadota2021TMC} for perfect feedback, except that $\bm{\pi}^{\text{DPP}}$ requires more complicated calculation of $\hat{h}_{n,t}$ due to imperfect feedback.
\end{remark}

\section{Numerical Results}
\label{Numerical Results}
In this section, we verify our findings through simulations.
We consider our proposed lower bound in Theorem~\ref{infinitelowerbound} for zero feedback, our constructed EUSs by Lemma~\ref{lemma:EUS2}, our proposed $\bm{\pi}^{\text{DPP}}$ in~\eqref{select_threshold} under both ACKs and ACKs/NACKs, the lower bounds in~\cite{Kadota2021TMC} for perfect feedback, the cyclic scheduler~\cite{Liyanaarachchi2025TIT}, and the Lagrange-cost greedy policy with ACKs/NACKs~\cite{Ji2024TCOM} for comparison.
We comply with the system model specified in Section~\ref{sec:SystemModel} to set up the numerical experiments and shall change the network configuration over a broad range to examine the impact of various factors.
Each result is an average of $10$ independent numerical experiments, each of which lasts for $10^6$ slots. 
We set $V = 1$ in the threshold form~\eqref{select_threshold} of the proposed $\bm{\pi}^{\text{DPP}}$.
Note that, for zero feedback, the EWSAoI of $\bm{\pi}^{\text{DPP}}$ is independent of the adopted feedback mechanism.
Also note that, our EUS and the cyclic scheduler~\cite{Liyanaarachchi2025TIT} are only designed for zero feedback and GAW traffic.

\subsection{Zero Feedback: GAW Traffic}  \label{Zero Feedback: GAW Traffic}

Table~\ref{tab: weightA} compares the EWSAoI of different schemes for $N = 4$ sources under zero feedback and the GAW traffic.
We set $\varepsilon_n = 0.1$ for each $n \in \mathcal{N}$ and $\alpha_1:\alpha_2:\alpha_3:\alpha_4 = 1:4:9:36$.
Note that we only can construct EUSs under some specific configurations by Lemma~\ref{lemma:EUS2}, which is not covered in~\cite{Li2021INFOCOM}.
It shows that the EWSAoI under all the schemes always decreases as $\rho$ increases, since a higher $\rho$ permits more transmission opportunities.
As expected, we observe that our EUS always achieves the lower bound in Theorem~\ref{infinitelowerbound}.
Not too surprisingly, we observe that $\bm{\pi}^{\text{DPP}}$ always approaches the lower bound in Theorem~\ref{infinitelowerbound} and slightly outperforms the cyclic scheduler~\cite{Liyanaarachchi2025TIT} with $1.71\%$-$3.66\%$ improvement.
This is because the threshold structure of $\bm{\pi}^{\text{DPP}}$ strikes a good balance between the long-term constraint and the EWSAoI, which leads to almost evenly distributed transmissions for each source, while the cyclic scheduler~\cite{Liyanaarachchi2025TIT} pursues a short-term constraint at the cost of degrading the transmission evenness.
We further observe that $\bm{\pi}^{\text{DPP}}$ outperforms the Lagrange-cost greedy policy~\cite{Ji2024TCOM} with $9.65\%$--$18.90\%$ improvement.
This is because the latter lacks an optimal Lagrange multiplier and requires a real-time constraint, which jointly cause worse transmission evenness.
We also note that the gap between the two bounds decreases almost linearly with $\rho$, which is consistent with the theoretical comparison in Remark~\ref{perfectlowerbound}.

\begin{table}[htbp]
    \centering
    \scriptsize
    \caption{EWSAoI in zero-feedback scenarios with GAW traffic for 4 sources with asymmetric priorities.}
    \label{tab: weightA}   
    \begin{tabular}{@{\hspace{1pt}}c@{\hspace{5pt}}c@{\hspace{5pt}}c@{\hspace{5pt}}c@{\hspace{5pt}}c@{\hspace{5pt}}c@{\hspace{5pt}}c}
    \toprule
    {$\rho$}& {\makecell{Lower bound \\perfect feedback\\~\cite{Kadota2021TMC}}}   & {\makecell{Lower bound \\ zero feedback\\ (Thm.~3)}} & {\makecell{Our EUS}} & {\makecell{Cyclic \\Scheduler\\~\cite{Liyanaarachchi2025TIT}}} & {\makecell{Our $\bm{\pi}^{\text{DPP}}$}} &{\makecell{Lagrange-cost \\greedy~\cite{Ji2024TCOM}}}  \\
    \midrule  
    0.1 &16.50  & 18.10 & 18.10 & 18.72 & 18.10  & 22.34 \\
    {1/6} &10.10  & 11.06 & 11.06 & 11.42 & 11.06  & 13.58  \\
    0.2  &8.50  & 9.30  & 9.30  & 9.58  & 9.30 & 11.40  \\
    {1/4}&6.90  & 7.54  & 7.54  & 7.79  & 7.54  & 9.25    \\
    0.3  &5.83   & 6.37  & {--}  & 6.56  & 6.38  & 7.88 \\ %
    0.5  &3.70  & 4.02  & 4.02  & 4.13  & 4.02 & 4.87   \\
    0.8  &2.50  & 2.70  & {--}  & 2.83  & 2.73  & 3.02  \\ %
    1.0  &2.10  & 2.26  & {--}  & 2.40  & 2.32  & 2.68 \\ %
    \bottomrule
    \end{tabular}
\end{table}

\subsection{Imperfect Feedback: Bernoulli Traffic}
Figs.~\ref{fig:verus weightstep}--\ref{fig:verus Dsigma_imperfectfeedbackBernoulli} compare the EWSAoI of different schemes in imperfect-feedback scenarios with Bernoulli traffic for $N = 12$ sources that are equally divided into four groups, each of which has the same priority.
We set $D_n = D$, $\sigma_n =\sigma$, $\varepsilon_n =\varepsilon$, and $\lambda_n = \lambda$ for each $n \in \mathcal{N}$.
To examine the impact of priorities, in Fig.~\ref{fig:verus weightstep}, we set $\lambda = 0.5$, $\varepsilon = 0.2$, $D = 10$, $\sigma = 0.2$, and set the priority ratio among $4$ groups as $1 : 1+r: 1+2r: 1+3r$ in Fig.~\ref{Arithmetic Sequence} and $1 : r: r^2 : r^3$ in Fig.~\ref{Geometric Sequence}, for different $r$.
To examine the impact of other factors, in Fig.~\ref{fig:verus Dsigma_imperfectfeedbackBernoulli}, we set the priority ratio among $4$ groups as $1:4:7:10$ and consider various values of $\lambda$, $\rho$, $\varepsilon$, $D$, $\sigma$.

\begin{figure}[!ht]
    \centering
    \subfigure[$1 : 1+r: 1+2r: 1+3r$.]{
    \label{Arithmetic Sequence}
    \includegraphics[width = 1.5in]{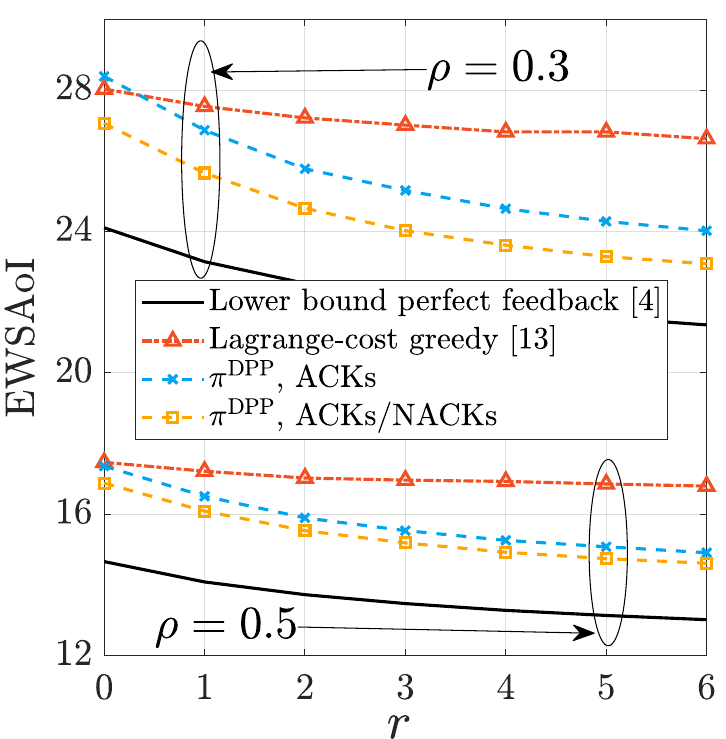}
}
    \subfigure[$1 : r: r^2 : r^3$.]{
    \label{Geometric Sequence}
	\includegraphics[width = 1.5in]{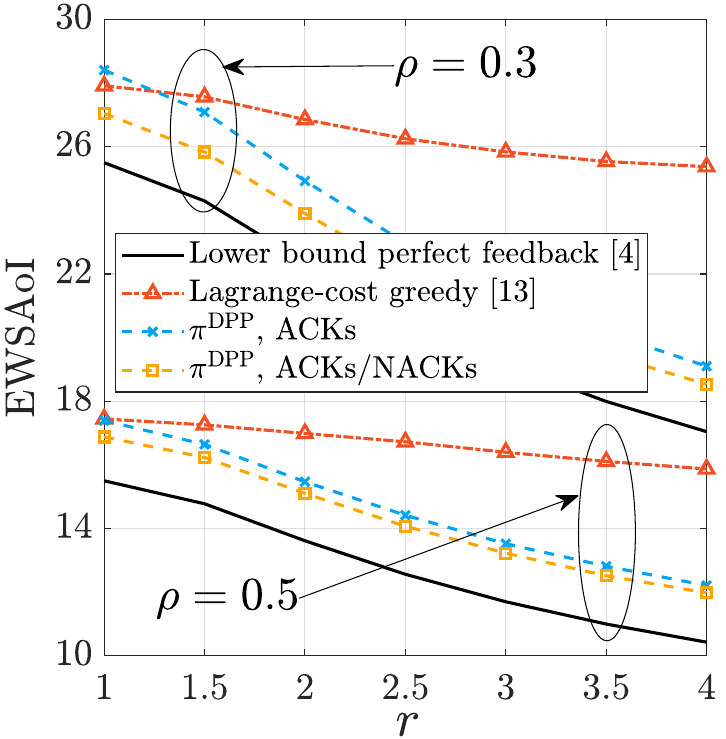}}
	\caption{EWSAoI vs. priority ratio in imperfect-feedback scenarios with Bernoulli traffic for $12$ sources.}
	\label{fig:verus weightstep}
\end{figure}

\subsubsection{Performance Comparison}
As evidenced by Figs.~\ref{fig:verus weightstep}--\ref{fig:verus Dsigma_imperfectfeedbackBernoulli}, the proposed $\bm{\pi}^{\text{DPP}}$ with ACKs/NACKs performs best in all the cases owing to its well-designed threshold structure~\eqref{select_threshold} as explained in Remark~\ref{threshold structure} and more accurate estimation of actual instantaneous AoI as shown in~\eqref{htD_ACKNACK}, 
while $\bm{\pi}^{\text{DPP}}$ with ACKs performs second since the ACKs mechanism provides less information than ACKs/NACKs.
The Lagrange-cost greedy policy~\cite{Ji2024TCOM} performs worst in almost all the cases due to the reasons as explained in Section~\ref{Zero Feedback: GAW Traffic}.
Specifically, the proposed $\bm{\pi}^{\text{DPP}}$ with ACKs/NACKs enjoys $0\%$--$5.83\%$ improvement over the proposed $\bm{\pi}^{\text{DPP}}$ with ACKs and $0.61\%$--$26.97\%$ improvement over the Lagrange-cost greedy policy~\cite{Ji2024TCOM}.
It is interesting to note that the $\bm{\pi}^{\text{DPP}}$ with ACKs generally outperforms the Lagrange-cost greedy policy~\cite{Ji2024TCOM} because of the well-designed threshold structure, although the former uses a weaker feedback mechanism.
As expected, the EWSAoI of all the schemes is closer to the lower bound for perfect feedback~\cite{Kadota2021TMC} under smaller $\sigma$, $D$, $\varepsilon$, or $\rho$ due to a weaker impact of imperfect feedback.

\subsubsection{Impact of priorities}
Fig.~\ref{fig:verus weightstep} shows that the EWSAoI of all the schemes decreases with $r$.
This is because it is easier to serve multiple sources when priorities are more unequally apportioned among the sources.
We observe that the gap between $\bm{\pi}^{\text{DPP}}$ with ACKs/NACKs and the Lagrange-cost greedy policy~\cite{Ji2024TCOM} becomes larger as $r$ increases.
This is because the weakness of the latter as introduced in Section~\ref{Zero Feedback: GAW Traffic} becomes more severe under more unequal priorities. 
We also observe that the gap between $\bm{\pi}^{\text{DPP}}$ with ACKs/NACKs and ACKs slightly decreases as $r$ increases.
This is because, under more unequal priorities, the priorities play a more important role in decision-making than the estimation accuracy of the instantaneous AoI.

\begin{figure*}[!ht]
	\centering
    \subfigure[ {\scriptsize $\lambda = 0.5$, $\sigma = 0.3$, $\rho = 0.5$.} ]{
    \centering
    \label{Imperfect_D}
    \includegraphics[width = 1.55in]{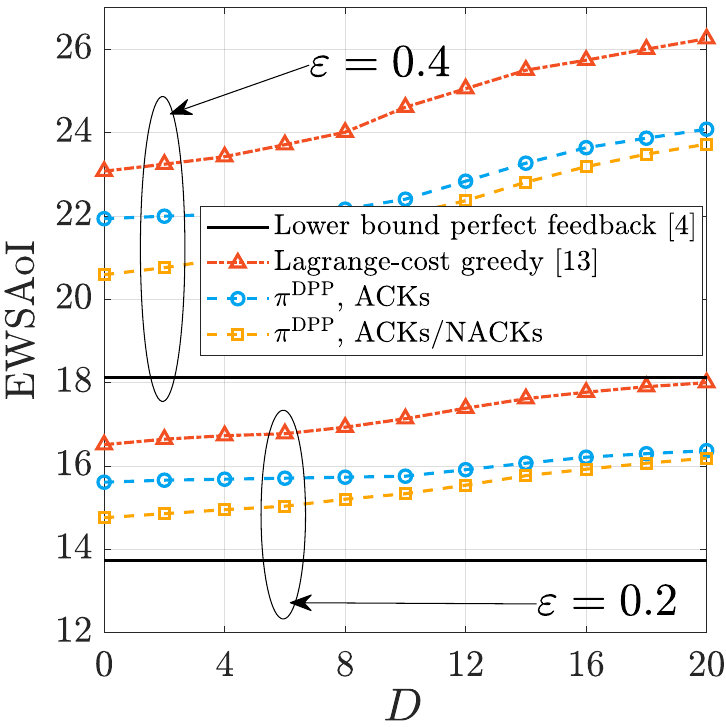}
}
    \subfigure[{\scriptsize $\lambda = 0.5$, $D = 10$, $\varepsilon = 0.2$.}]{
    \centering
    \label{Imperfect_sigma}
	\includegraphics[width = 1.55in]{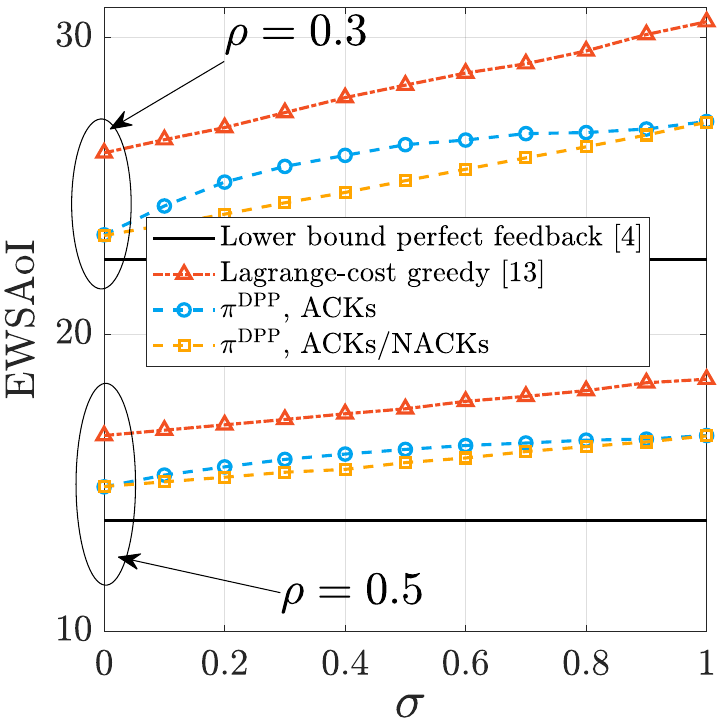}} 
    \subfigure[{\scriptsize $\lambda = 0.5$, $D = 10$, $\sigma = 0.2$.}]{
   \label{Imperfect_rho}
    \includegraphics[width = 1.55in]{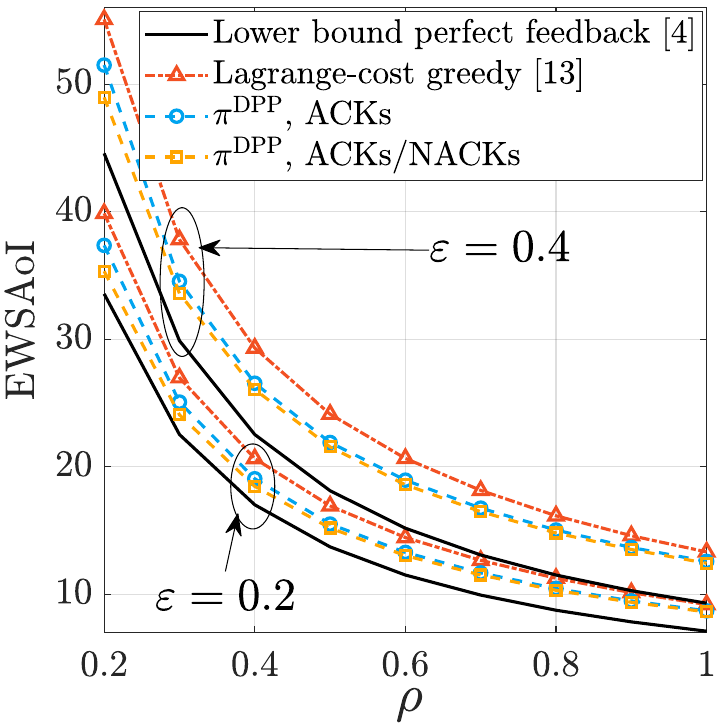}}
    \subfigure[{\scriptsize  $D = 10$, $\rho = 0.5$, $\varepsilon = 0.2$.}]{
   \label{Imperfect_lambda}
	\includegraphics[width = 1.55in]{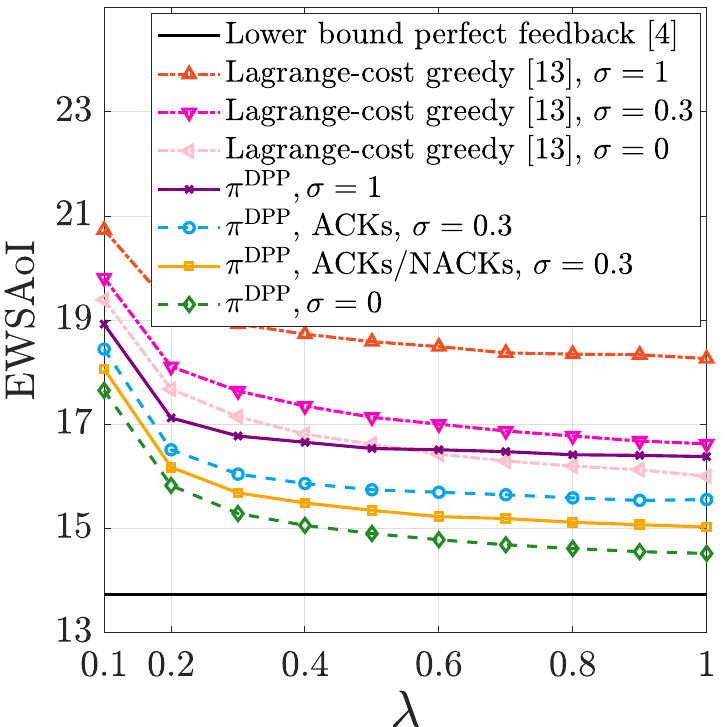}}
	\caption{EWSAoI in imperfect-feedback scenarios with Bernoulli traffic for $12$ sources.}
	\label{fig:verus Dsigma_imperfectfeedbackBernoulli}
\end{figure*}

\subsubsection{Impact of feedback}
Figs.~\ref{Imperfect_D}--\ref{Imperfect_sigma} show that the EWSAoI of all the schemes increases with $D$ and $\sigma$.
This is because the available feedback information decreases as $D$ or $\sigma$ increases, leading to less reasonable scheduling decisions.
Figs.~\ref{Imperfect_D}--\ref{Imperfect_sigma} also show that the performance gap between $\bm{\pi}^{\text{DPP}}$ with ACKs/NACKs and the Lagrange-cost greedy policy~\cite{Ji2024TCOM} decreases as $D$ or $\sigma$ increases.
This is because the efficacy of reasonably leveraging real-time observations for scheduling becomes weaker under less feedback information.
We further observe that the gap between $\bm{\pi}^{\text{DPP}}$ with ACKs/NACKs and ACKs becomes smaller as $D$ increases.
This is because a more delayed NACK becomes less helpful to estimate the instantaneous AoI as shown in~\eqref{htD_ACKNACK}. 
We also observe that the gap between $\bm{\pi}^{\text{DPP}}$ with ACKs/NACKs and ACKs first becomes larger and then becomes smaller as $\sigma$ increases.
This is because, by comparing~\eqref{htD_ACK} and~\eqref{htD_ACKNACK}, the ACKs/NACKs mechanism is more helpful to estimate the AoI than the ACK mechanism when $\sigma$ is farther away from $0$ or $1$, but this benefit diminishes when $\sigma$ approaches $0$ or $1$.

\subsubsection{Impact of transmission rate constraint and transmission errors}
Fig.~\ref{Imperfect_rho} shows that the EWSAoI of all the schemes decreases with $\rho$ due to more transmission opportunities.
We observe that, as $\rho$ increases, the performance gap between all the schemes with the same $\varepsilon$ decreases.
This is because increasing $\rho$ reduces the benefit of not only more reasonable scheduling but also a stronger feedback mechanism. 
Figs.~\ref{Imperfect_D} and~\ref{Imperfect_rho} show that the EWSAoI under all the schemes increases with $\varepsilon$ due to less successful transmissions.
We also observe that the performance gap between all the schemes decreases as $\varepsilon$ increases.
This is because the benefit of reasonable scheduling becomes less pronounced under more transmission errors.

\subsubsection{Impact of network traffic}
Fig.~\ref{Imperfect_lambda} shows that the EWSAoI of all the schemes generally decreases as $\lambda$ increases, due to the delivery of fresher updates.
We observe that the performance gap between all the schemes becomes smaller as $\lambda$ decreases.
This is because the benefit of more reasonable scheduling becomes weaker for more sporadic packets.

\begin{figure}[!ht]
    \centering
    \subfigure[asymmetric $\lambda_n$ and $\sigma_n$.]{
    \label{asyn_rho}
    \includegraphics[width = 1.52in]{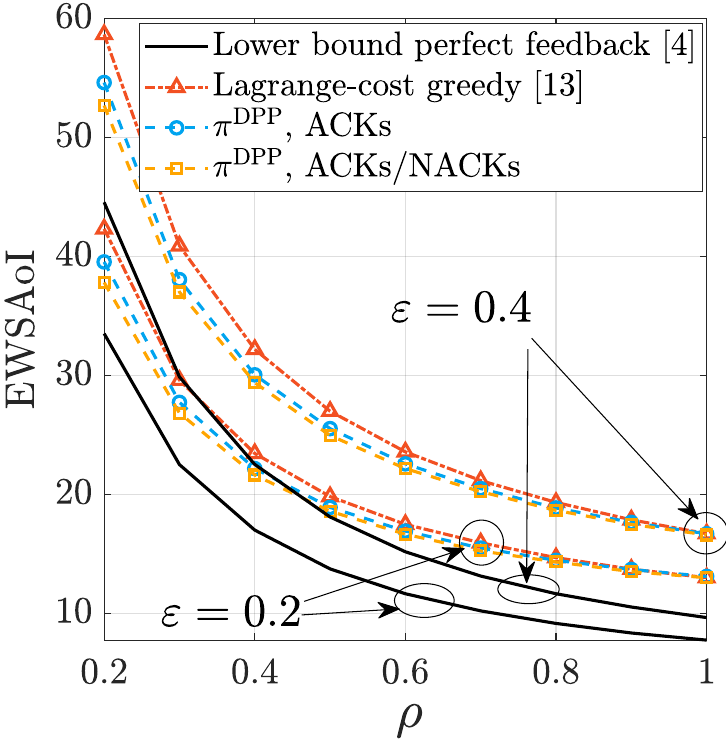}
}
    \subfigure[asymmetric $\varepsilon_n$ and $D_n$.]{
    \label{asyn_N}
	\includegraphics[width = 1.52in]{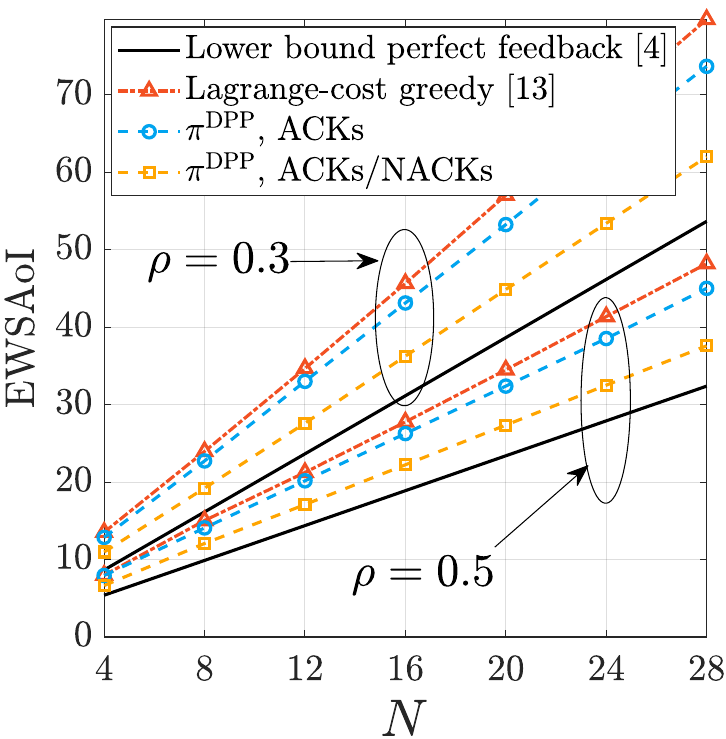}}
	\caption{EWSAoI in imperfect-feedback scenarios with Bernoulli traffic with more asymmetric parameters.}
	\label{fig: asyn}
\end{figure}

With the same priority setting in Fig.~\ref{fig:verus Dsigma_imperfectfeedbackBernoulli}, Fig.~\ref{fig: asyn} compares the EWSAoI of different schemes in more asymmetric cases.
In Fig.~\ref{asyn_rho}, we set $N=12$, $D_n = 10$, $\varepsilon_n = \varepsilon$, $\lambda_n = \frac{N-n+1}{2N}$, $\sigma_n = \frac{n-1}{2N}$ for each $n \in \mathcal{N}$.
In Fig.~\ref{asyn_N}, we set $\lambda_n = 0.5$, $\sigma_n = 0.2$, $\varepsilon_n = \frac{N-n+1}{2N}$ for each $n \in \mathcal{N}$, $D_{n} = 5$ for $n\in\{1,\dots,N/2\}$, and $D_{n} = 10$ for $n\in\{N/2+1,\dots,N\}$.
We observe that $\bm{\pi}^{\text{DPP}}$ with ACKs/NACKs enjoys $0.89\%$--$16.11\%$ improvement over $\bm{\pi}^{\text{DPP}}$ with ACKs and $0.14\%$--$22.24\%$ improvement over the Lagrange-cost greedy policy~\cite{Ji2024TCOM}.
We still observe that $\bm{\pi}^{\text{DPP}}$ with ACKs outperforms the Lagrange-cost greedy policy~\cite{Ji2024TCOM} in almost all the cases.
Fig.~\ref{asyn_N} shows that the EWSAoI increases linearly with $N$.
This is because increasing $N$ reduces the transmission opportunities available to each source under the fixed transmission rate constraint.
We also observe that the performance gap between all the schemes becomes more notable as $N$ increases.
This is due to the fact that the transmission opportunities for each source decrease as $N$ increases, which makes both reasonable scheduling and an effective feedback mechanism more important.

\section{Conclusions} \label{Conclusions}
This paper has investigated scheduling design for optimizing the infinite-horizon EWSAoI in downlink systems with imperfect feedback and transmission rate constraints.  
For zero feedback under the GWA traffic, a closed-form lower bound of achievable EWSAoI and a policy that achieves this bound in many cases have been proposed.
For zero feedback under the Bernoulli traffic, a threshold-based policy has been developed based on the theory of Lyapunov optimization and a closed-form performance guarantee has been provided.
Furthermore, this policy has been extended to consider general imperfect feedback under both ACKs and ACKs/NACKs mechanisms.
Numerical results demonstrated the performance advantage of the proposed policies in various scenarios. 
These findings reveal the joint impact of imperfect feedback and transmission rate constraints, which can be used to guide the preference to establish suitable feedback channels, choose suitable feedback mechanisms, and assign suitable network resources. 
A possible direction for future work would be to investigate decentralized uplink schemes with imperfect feedback.

\appendices

\section{Proof of Lemma~\ref{minimumfn}}  \label{prooflemmaminimumf}
Define the inner product \(\langle \bm{i}, \bm{j} \rangle \triangleq \bm{i}^{\text{T}} \bm{A}_n \bm{j}\) and the induced norm \(\|\bm{i}\| \triangleq \sqrt{\langle \bm{i}, \bm{i} \rangle} = \sqrt{\bm{i}^{\text{T}} \bm{A}_n \bm{i}}\).
By the Cauchy-Schwarz inequality
$\langle \bm{i}, \bm{j} \rangle^2 \leq \|\bm{i}\|^2 \|\bm{j}\|^2$ \cite[Theorem 6.16]{Linear},
we have
\begin{align}  \label{cauchy}
\langle \bm{A}_n^{-1} \bm{e}, {\hat{\bm{X}}_n} \rangle^2 & \leq \|\bm{A}_n^{-1} \bm{e}\|^2 \cdot \|{\hat{\bm{X}}_n}\|^2 \notag \\ 
\implies
    {\hat{\bm{X}}}_n^{\text{T}} \bm{A}_n {\hat{\bm{X}}_n} & \geq \frac{(\bm{e}^{\text{T}} {\hat{\bm{X}}_n})^2}{\bm{e}^{\text{T}} \bm{A}_n^{-1} \bm{e}}.
\end{align}
The equality in~\eqref{cauchy} holds if and only if \({\hat{\bm{X}}_n}\) is proportional to \(\bm{A}_n^{-1} \bm{e}\), i.e., \({\hat{\bm{X}}_n} = c \bm{A}_n^{-1} \bm{e}\) for some constant \(c\).
Substituting~\eqref{cauchy} into the objective function of problem~\eqref{fxproblem}, we obtain
\begin{align}
    {f}_n({\hat{\bm{X}}_n}) \geq \frac{T}{2}\frac{1}{\bm{e}^\text{T} \bm{A}_n^{-1}  \bm{e}} + \frac{1}{2}.  
\end{align}
Thus, the solution to problem~\eqref{fxproblem} can be obtained when the equality in~\eqref{cauchy} holds and is given by
\begin{align}   \label{X*}
    {\hat{\bm{X}}}_n^{*} = c \bm{A}_n^{-1} \bm{e} 
     = \frac{T \bm{A}_n^{-1} \bm{e}}{\bm{e}^{\text{T}} \bm{A}_n^{-1} \bm{e}},
\end{align}
where the last equality is due to \(\bm{e}^{\text{T}} {\hat{\bm{X}}_n} = T\), specifically
\begin{align}
\bm{e}^{\text{T}} (c \bm{A}_n^{-1} \bm{e}) = T \implies c = \frac{T}{\bm{e}^{\text{T}} \bm{A}_n^{-1} \bm{e}}.
\end{align}
Meanwhile, the minimum of $\hat{f}({\hat{\bm{X}}_n})$ is given by
\begin{align}  \label{f*}
    \hat{f}_n^* = \frac{T}{2} \frac{1}{\bm{e}^{\text{T}} \bm{A}_n^{-1} \bm{e}} + \frac{1}{2}.
\end{align}
The terms $\bm{A}_n^{-1} \bm{e}$ and \(\bm{e}^{\text{T}} \bm{A}_n^{-1} \bm{e}\) can be calculated as
\begin{align}  \label{A-1}
\bm{A}_n^{-1} \bm{e} & = \left( \frac{1}{1+\varepsilon_n}, \frac{1-\varepsilon_n}{1+\varepsilon_n}, \dots, \frac{1-\varepsilon_n}{1+\varepsilon_n}, \frac{1}{1+\varepsilon_n} \right)^{\text{T}}, \\
\bm{e}^{\text{T}} \bm{A}_n^{-1} \bm{e} 
&= \frac{2 + (U_n(T)-1)(1-\varepsilon_n)}{1+\varepsilon_n} \label{1A1}.
\end{align}
Substituting~\eqref{A-1} and~\eqref{1A1} into~\eqref{X*}, we obtain
\begin{align}  \label{x*_finitehorizon}
\hat{X}_{n,k}^* =
\begin{cases}
    \frac{T}{2 + (U_n(T)-1)(1-\varepsilon_n)}, &\text{if } k = 1, U_n(T) + 1, \\
    \frac{T(1-\varepsilon_n)}{2 + (U_n(T)-1)(1-\varepsilon_n)},   &\text{if } k = 2, \dots, U_n(T).
\end{cases}
\end{align}
Note that $\hat{X}_{n,k}^* \in [0,T-1]$ for each $k \in \{1,2,\ldots,U_n(T)+1\}$.
So, substituting~\eqref{1A1} into~\eqref{f*} yields~\eqref{f*_finitehorizon}.

\section{Proof of Theorem~\ref{zero-feedback lowerbound}} \label{proofzerofeedbacklowerbound}
From~\eqref{f*_finitehorizon}, we have
\begin{align}
& \mathbb{E}^{\bm{\pi}^\text{single}}\left[\frac{R_n(T)}{T}\mid \{l_{n,k}\}_{k=1}^{U_n(T) } \right] \notag \\
& \geq \frac{T}{2} \frac{1+\varepsilon_n}{2 + (U_n(T) - 1)(1-\varepsilon_n)} + \frac{1}{2}.   \label{eq: ERTUT}
\end{align}
Taking the expectation of both sides of~\eqref{eq: ERTUT} with respect to the distribution of $\{l_{n,k}\}_{k=1}^{U_n(T)}$ under all possible $U_n(T)$, and letting $T$ approach $\infty$, we obtain
\begin{align}   \label{ERT_infinite}
 &\limsup_{T \to \infty} \mathbb{E}^{\bm{\pi}^\text{single}}\left[\frac{R_n(T)}{T} \right] \notag \\
& \geq \limsup_{T \to \infty} \mathbb{E}^{\bm{\pi}^\text{single}}\left[\frac{T}{2}\frac{1+\varepsilon_n}{2 + (U_n(T) - 1)(1-\varepsilon_n)} + \frac{1}{2} \right]  \notag \\
& 
 \geq \frac{1 + \varepsilon_n}{2\rho_n(1-\varepsilon_n)} + \frac{1}{2},
\end{align}
due to $\limsup_{T \to \infty}\mathbb{E}^{\bm{\pi}^\text{single}}\left[ \frac{U_n(T)}{T}\right] \leq \rho_n$ under~\eqref{problem_n}.

\section{Proof of Lemma~\ref{lemma:tau}} \label{proofoflemmatau}
Under an EUS with the specified individual transmission rates, for each $n \in \mathcal{N}$, the scheduling action of source $n$ starts at $l_{n,1}=X_{n,1}$ and exhibits periodicity with period $1/{\rho_n^*}$ (or a multiple thereof).
So, such an EUS exists if and only if 
\begin{equation}\label{eq:tau_equiv}
    X_{n,1}+k\cdot1/\rho_{n}^* \neq X_{n',1}+k'\cdot1/\rho_{n'}^*
\end{equation}
for any distinct $n,n'\in \mathcal{N}$ and any $k,k'\in\mathbb{N}$.
The result follows since the conditions in~\eqref{eq:tau},~\eqref{eq:tau_equiv} are equivalent.

\section{Proof of Lemma~\ref{lemma:EUS2}} \label{proofoflemmaEUS}

Due to space limitations, we give a proof sketch here.
For given $n\neq n'$, if $\rho^*_n=\rho^*_{n'}$, the two corresponding nodes in the specified splitting tree share a common parent node. 
Following the tree structure, we have that $X_{n,1}\neq X_{n',1}$ and both are less than $1/\rho^*_n$, which is clearly the $\gcd$ of $1/\rho^*_n$ and $1/\rho^*_{n'}$. 
Hence, \eqref{eq:tau} holds.
This argument can be iteratively generalized to any two leaves by the structure of a splitting tree.
So, by Lemma~\ref{lemma:tau}, we can design a required EUS.

\section{Proof of Theorem~\ref{thm:DPP}}  \label{proofqueuestability}

Implementing $\bm{\pi}^{\text{DPP}}$ into~\eqref{Dtbound_multiuser} yields:
\begin{align}  
   {C}_t^{\bm{\pi}^{\text{DPP}}} 
   & \leq  \frac{V(\rho^2 + 1)}{2} + 1 + VQ_t \Big( \mathbb{E}^{ \bm{\pi}^{\text{DPP}} } \left[ \sum_{n \in \mathcal{N}} a_{n,t} \mid \bm{o}_t\right] - \rho \Big) \notag \\
&
+ \sum_{n \in \mathcal{N}} ( \theta_n + \alpha_n )(1-\varepsilon_n)( w_{n,t} - \hat{h}_{n,t}) \mathbb{E}^{ \bm{\pi}^{\text{DPP}} } \big [ a_{n,t} \mid \bm{o}_t\big] 
\notag \\ & + \sum_{n \in \mathcal{N}} \big( \theta_n + \alpha_n \hat{h}_{n,t} \big)
   \notag \\
   & \leq \frac{V(\rho^2 + 1)}{2} + 1 + \sum_{n \in \mathcal{N}} \big( \theta_n + \alpha_n \hat{h}_{n,t} \big)
   \notag \\
   & + \sum_{n \in \mathcal{N}} ( \theta_n + \alpha_n )(1-\varepsilon_n)( w_{n,t} - \hat{h}_{n,t}) \eta_n,
   \label{ref_multiuser}
\end{align}
where~\eqref{ref_multiuser} due to that $\bm{\pi}^{\text{ran}}$ is implemented into the RHS of~\eqref{Dtbound_multiuser} and $\bm{\pi}^{\text{DPP}}$ minimizes the RHS of~\eqref{Dtbound_multiuser}.
As $\{\theta_n\}_{n\in\mathcal{N}}$ are adjustable parameters, we let 
$
   \theta_n = \frac{\alpha_n(1 - ( 1 - \varepsilon_n) \eta_n )}{(1-\varepsilon_n) \eta_n}
$
for each $n \in \mathcal{N}$ and substitute them into the RHS of~\eqref{ref_multiuser} to obtain
\begin{align}  \label{phit3}
   {C}_t^{\bm{\pi}^{\text{DPP}}}
    &\leq \frac{V(\rho^2 + 1)}{2} + \sum_{n \in \mathcal{N}} \Big( \frac{\alpha_n}{(1-\varepsilon_n) \eta_n} + \alpha_n w_{n,t} \Big). 
\end{align}
Taking the expectation of both sides of~\eqref{phit3} with respect to the distribution of $\bm{o}_t$ yields 
\begin{align}  \label{phit4_multiuser}
   &\mathbb{E}^{\bm{\pi}^{\text{DPP}} }[ L_{t+1} - L_t ] + \sum_{n \in \mathcal{N}} \alpha_n \mathbb{E}^{ \bm{\pi}^{\text{DPP}} }[ h_{n,t+1}]  \notag \\
    &\leq \frac{V(\rho^2 + 1)}{2} + \sum_{n \in \mathcal{N}} \left( \frac{\alpha_n}{(1-\varepsilon_n) \eta_n} +  \alpha_n \mathbb{E}^{ \bm{\pi}^{\text{ran}} }\big[ w_{n,t} \big ] \right). 
\end{align}
Summing up~\eqref{phit4_multiuser} over $t \in \{0,1,2,\ldots,T-1\}$ and using the law of telescoping sums yields:
\begin{align}  \label{ELT}
    & \mathbb{E}^{ \bm{\pi}^{\text{DPP}} }[L_T] - \mathbb{E}^{ \bm{\pi}^{\text{DPP}} }[L_0 ] + \sum_{t = 1}^{T} \sum_{n \in \mathcal{N}} \alpha_n \mathbb{E}^{ \bm{\pi}^{\text{DPP}} }[ h_{n,t}] \notag \\
    & \leq \ B_1 T + B_2,
\end{align}
where $B_1 =  \frac{ V(\rho^2 + 1)}{2} + \sum_{n \in \mathcal{N}} \frac{\alpha_n }{(1-\varepsilon_n) \eta_n}$ and $B_2 =  \sum_{t = 0}^{T-1} \sum_{n \in \mathcal{N}} \alpha_n \mathbb{E}^{ \bm{\pi}^{\text{ran}} }\big[ w_{n,t} \big ] $.
By $\mathbb{E}^{\bm{\pi}^{\text{DPP}} }[h_{n,t} ] \geq 1$, we get
\begin{align}  \label{LT}
    \mathbb{E}^{ \bm{\pi}^{\text{DPP}} }[L_T ] 
    & \leq B_1 T + B_2 + \mathbb{E}^{ \bm{\pi}^{\text{DPP}} } [L_0 ].
\end{align}
By~\eqref{eq:lyapunov_multi}, we have 
\begin{align}  \label{EaLT}
    \mathbb{E}^{ \bm{\pi}^{\text{DPP}} } [L_T] = \frac{V\mathbb{E}^{ \bm{\pi}^{\text{DPP}} }[Q_T^2 ]}{2} + \sum_{n \in \mathcal{N}} \theta_n \mathbb{E}^{ \bm{\pi}^{\text{DPP}} } [ h_{n,T} ].
\end{align}
Substituting~\eqref{EaLT} into~\eqref{LT} yields
\begin{align}
    \frac{V\mathbb{E}^{ \bm{\pi}^{\text{DPP}} } [Q_T^2 ]}{2} 
    & \leq \frac{V\mathbb{E}^{ \bm{\pi}^{\text{DPP}} } [Q_T^2]}{2} + \sum_{n \in \mathcal{N}} \theta_n \mathbb{E}^{ \bm{\pi}^{\text{DPP}} } [ h_{n,T} ] \notag \\
    & \leq B_1 T + B_2 + \mathbb{E}^{ \bm{\pi}^{\text{DPP}} } [L_0 ].
\end{align}
Since $\left (\mathbb{E}^{ \bm{\pi}^{\text{DPP}} } [Q_T] \right)^2 \leq \mathbb{E}^{ \bm{\pi}^{\text{DPP}} } \left[Q_T^2 \right] $, 
we obtain 
\begin{align}   \label{limsupEQT}
    \limsup_{T \rightarrow \infty} \frac{\mathbb{E}^{ \bm{\pi}^{\text{DPP}} }\left[Q_T \right]}{T} 
    \leq \limsup_{T \rightarrow \infty} 
    \sqrt{\frac{2(B_1 T + B_2 + \mathbb{E}^{ \bm{\pi}^{\text{DPP}} }\left[L_0 \right])}{VT^2}}. 
\end{align}
Substituting $\mathbb{E}^{ \bm{\pi}^{\text{ran}} } \left[ L_0 \right] = \sum_{n \in \mathcal{N}} \theta_n$ obtained from~\eqref{eq:lyapunov_multi} and
\begin{align}  \label{eqwt}
\limsup_{T \rightarrow \infty}  \frac{B_2}{T^2} 
& = 
   \limsup_{T \rightarrow \infty}  \frac{1}{T^2} \sum_{t = 0}^{T-1} \sum_{n \in \mathcal{N}} \alpha_n \mathbb{E}^{ \bm{\pi}^{\text{ran}} }[ w_{n,t}] \notag \\
   & = \limsup_{T \rightarrow \infty}  \frac{1}{T}  \sum_{n \in \mathcal{N}} \alpha_n \frac{1-\lambda_n}{\lambda_n} = 0,
\end{align}
into~\eqref{limsupEQT}, we have $\limsup_{T \rightarrow \infty} \frac{\mathbb{E}^{ \bm{\pi}^{\text{DPP}} }\left[Q_T \right]}{T} = 0$.

\section{Proof of Theorem~\ref{thm:upp}} \label{proofoftheoremthmupp}
    Dividing both sides of~\eqref{ELT} by $T$ and letting $T$ approach $\infty$, and then substituting
\(
\limsup_{T \rightarrow \infty}  \frac{B_2}{T} 
=  \sum_{n \in \mathcal{N}} \alpha_n \frac{1-\lambda_n}{\lambda_n} ,
\)
    into the resulting equation yield
\begin{align}
    &\limsup_{T \rightarrow \infty}  \frac{1}{T} \sum_{t = 0}^{T-1} \sum_{n \in \mathcal{N}} \alpha_n \mathbb{E}^{ \bm{\pi}^{\text{DPP}} }[ h_{n,t}]
    \notag \\
    & \leq \limsup_{T \rightarrow \infty} \bigg( B_1 + \sum_{n \in \mathcal{N}}  \alpha_n  \frac{1-\lambda_n}{\lambda_n}  - \frac{\mathbb{E}^{ \bm{\pi}^{\text{DPP}}  } [L_T ]}{T} + \frac{\mathbb{E}^{ \bm{\pi} } [L_0]}{T} \notag \\
   &\quad\quad\quad\quad\quad\quad - \sum_{n \in \mathcal{N}} \alpha_n \Big(\frac{\mathbb{E}^{ \bm{\pi}^{\text{DPP}}  } [h_{n,T} ] } {T} 
    - \frac{\mathbb{E}^{ \bm{\pi}^{\text{DPP}}  } [h_{n,0} ] } {T} \Big) \bigg) \notag \\
    &\leq B_1 + \sum_{n \in \mathcal{N}}  \alpha_n  \frac{1-\lambda_n}{\lambda_n},  \label{lyapunov_upperbound}
\end{align}
where the last inequality is due to $\mathbb{E}^{ \bm{\pi}^{\text{DPP}}  } [ L_0 ] = \sum_{n \in \mathcal{N}} \theta_n $, $\mathbb{E}^{ \bm{\pi}^{\text{DPP}}  } [L_T ] \geq \sum_{n \in \mathcal{N}} \theta_n $ obtained from $\eqref{eq:lyapunov_multi}$, $\mathbb{E}^{ \bm{\pi}^{\text{DPP}}  } [h_{n,T} ] \geq 1$, and $\mathbb{E}^{ \bm{\pi}^{\text{DPP}}  } [h_{n,0} ] = 1$.
Then, recalling $B_1 =  \frac{ V(\rho^2 + 1)}{2} + \sum_{n \in \mathcal{N}} \frac{\alpha_n }{(1-\varepsilon_n) \eta_n}$, we minimize the upper bound as shown at the RHS of~\eqref{lyapunov_upperbound} through solving
\begin{equation} \label{problem_sub}
    \begin{aligned}
       \min_{\eta_n \in [0,\rho], \forall{n\in\mathcal{N}}} \quad &  {\sum_{n \in \mathcal{N}} \frac{\alpha_n}{(1-\varepsilon_n) \eta_n}}, \quad
   \text{s.t.} \  \sum_{n \in \mathcal{N}} \eta_n = \rho.
   \end{aligned}
\end{equation}
By the convex optimization theory~\cite{boyd2004convex},~\eqref{optimaletan} can be found as the solution of~\eqref{problem_sub}.

\bibliography{test}

\end{document}